\definecolor{nblue}{RGB}{19, 39, 150}
\definecolor{dark-gray}{gray}{0.35}
\newcommand{\arxiv}[1]{\href{http://arxiv.org/abs/#1}{\texttt{arXiv:#1}}}
\theoremstyle{plain}
\newtheorem{theorem}{Theorem}
\newtheorem{lemma}[theorem]{Lemma}
\newtheorem{corollary}[theorem]{Corollary}
\newtheorem{proposition}[theorem]{Proposition}
\theoremstyle{definition}
\theoremstyle{remark}
\newtheorem{remark}[theorem]{Remark}
\newcommand{\N}{\mathbb{N}}
\newcommand{\set}[1]{\{#1\}}
\DeclarePairedDelimiter{\abs}{\lvert}{\rvert}
\DeclarePairedDelimiter{\paren}{(}{)}
\DeclarePairedDelimiterX{\setsuchthat}[2]{\lbrace}{\rbrace}{#1\
\delimsize\vert\ #2} 
\numberwithin{theorem}{section}
\def\nfrac#1#2{{\textstyle\frac{#1}{#2}}}
\def\dfrac#1#2{\lower0.15ex\hbox{\large$\frac{#1}{#2}$}}
\def\dvec{\boldsymbol{d}}
\def\dmax{d_{\max}}
\def\kvec{\boldsymbol{k}}
\def\kmax{k_{\max}}
\numberwithin{equation}{section}
\def\({\bigl(}
\def\){\bigr)}
\title{Sampling hypergraphs with given degrees}
\author{Martin Dyer\footnote{School of Computing, University of Leeds, Leeds LS2 9JT, UK. \texttt{m.e.dyer@leeds.ac.uk}}
\and Catherine Greenhill\footnote{School of Mathematics and Statistics, UNSW Sydney, NSW 2052, Australia. \texttt{c.greenhill@unsw.edu.au}, \texttt{james.ross@unsw.edu.au}}
\and Pieter Kleer\footnote{Tilburg University, Tilburg, The Netherlands. \texttt{p.s.kleer@tilburguniversity.edu}}\\
\and James Ross\textsuperscript{$\dagger$} 
\and Leen Stougie\footnote{CWI, INRIA-Erable, Vrije Universiteit,
Amsterdam, The Netherlands. \texttt{Leen.Stougie@cwi.nl}} }
\date{13 July 2021}
\begin{document}

\maketitle

\begin{abstract}
There is a well-known connection between hypergraphs and bipartite graphs,
obtained by treating the incidence matrix of the hypergraph as the
biadjacency matrix of a bipartite graph.  We use this connection to
describe and analyse a rejection sampling algorithm for sampling
simple uniform hypergraphs with a given degree sequence.
Our algorithm uses, as a black box, an algorithm $\mathcal{A}$ for sampling
bipartite graphs with given degrees, uniformly or nearly uniformly,
in (expected) polynomial time.  The expected
runtime of the hypergraph sampling algorithm depends on the (expected) runtime
of the bipartite graph sampling algorithm $\mathcal{A}$, and the probability
that a uniformly random bipartite graph with given degrees
corresponds to a simple hypergraph.
We give some conditions on the hypergraph degree sequence which guarantee that this probability is bounded below by a positive constant.

\medskip

\noindent \emph{Keywords}:\ hypergraph, degree sequence, sampling, algorithm, Markov chain
\end{abstract}

\section{Introduction}
Hypergraphs are combinatorial objects which can be used to abstractly
represent general dependence structures, with applications in
many areas including machine learning \cite{ZHS2007} and
bioinformatics \cite{PK2013}.  We consider the problem
of efficiently sampling simple, $k$-uniform hypergraphs with a given degree
sequence, either uniformly or approximately uniformly.

More precisely, a hypergraph $H = (V,E)$ consists of a finite set $V = V(H)=
\{v_1,\dots,v_n\}$ of $n$ nodes, and a multiset $E=E(H) = \{e_1, e_2, \ldots, e_m\}$ of $m$ 
edges, where each edge is a nonempty multisubset of $V$.
We say that $H$ is \emph{simple} if there are no repeated edges in $E$
and no edge of $E$ contains a repeated node (so $E$ is a set of subsets of nodes).
For any node $v_i \in V$, we define the degree of $v_i$ by
  \[
    d_i = \deg_H(v_i)
      = |\{ e \in E(H) : v_i \in e \}|,
  \]
and write $\boldsymbol{d} = (d_1, \ldots, d_n)$ for the degree
sequence of $H$.
For a positive integer $k$, we say that $H$ is $k$-\emph{uniform} if
every edge contains exactly $k$ nodes, counting multiplicities when $H$ is not simple.
We then write
$\mathcal{H}_k(\boldsymbol{d})$ for the set of $k$-uniform simple
hypergraphs with degree sequence $\boldsymbol{d}$. If there is a
positive integer $d$ such that $d_i = d$ for all $i \in [n]$, then
we write $\mathcal{H}_k(n,d)$ for the set of all $k$-uniform
$d$-regular simple hypergraphs on $n$ nodes, instead of
$\mathcal{H}_k(\boldsymbol{d})$.

Recently, Deza, Levin, Meesum \& Onn~\cite{DLMO} proved that the construction
problem for simple 3-uniform hypergraphs is NP-hard.  That is, given
$\dvec$ it is NP-hard to decide whether there exists a 3-uniform simple
hypergraph with degree sequence $\dvec$.  This implies that it is
not possible to approximate $|\mathcal{H}_k(\dvec)|$
efficiently in general, since approximate counting can distinguish 0
from a positive number.
Moreover, hardness of construction also directly implies that (approximate) uniform sampling is
a difficult problem in general.

Arafat et al. \cite{ABDB2020} recently gave an algorithm to construct a non-simple hypergraph with given degrees and edge sizes. Chodrow \cite{C2019} considered Markov Chain Monte Carlo approaches for generating such hypergraphs.  We emphasize that throughout this work, we only consider simple hypergraphs.
To the best of our knowledge, the only rigorously-analysed algorithm
for this problem in the literature is the configuration model, see
Section~\ref{s:sampling-hypergraphs}.

Our approach is based on the well-known connection between hypergraphs
and bipartite graphs.  We first fix some notation for
bipartite graphs and then explain this relation. A bipartite graph $B = (X \cup Y, A)$ consists of a bipartition $X =
\{x_1,\dots,x_n\}$ and $Y = \{y_1,\dots,y_m\}$ of nodes, and an edge
set $A \subseteq X\times Y = \{\{x,y\} : x \in X, y \in Y\}$.
For a pair of nonnegative integer sequences $\boldsymbol{d} = (d_1,
\ldots, d_n)$ and $\boldsymbol{k} = (k_1, \ldots, k_m)$, let
$\mathcal{B}(\boldsymbol{d},\boldsymbol{k})$ be the set of all
simple bipartite graphs $B$ such that
  \[
    \deg_B(x_i) = d_i \textrm{ for all } x_i \in X,
    \quad \textrm{and} \quad
    \deg_B(y_j) = k_j \textrm{ for all } y_j \in Y.
  \]
We say that $(\boldsymbol{d}, \boldsymbol{k})$ is the (bipartite)
degree sequence of $B$.
Note that if $\sum_{i=1}^n d_i \neq \sum_{j=1}^m k_j$, then $\mathcal{B}(\boldsymbol{d},\boldsymbol{k})=\emptyset$. 
If there is a fixed integer $k$ such that $k_j = k$ for all $j \in
[m]$, then we write $\mathcal{B}(\boldsymbol{d}, k)$ instead of
$\mathcal{B}(\boldsymbol{d},\boldsymbol{k})$, and we call such
bipartite graphs \emph{half-regular}. If in addition
$\dvec=(d,d,\ldots, d)$ is regular then we write
$\mathcal{B}(n,d,k)$. For any node $v \in X \cup Y$, let
$\mathcal{N}_B(v) = \{w \in X \cup Y : \{v,w\} \in A\}$ be the
neighbourhood of node $v$ in $B$.

Every hypergraph $H = (V, E)$ can be represented as a
bipartite graph $B_H$, as follows.
Fix a labelling of the edges of $H$, say $E = \{e_1,\ldots, e_m\}$,
then let
\[ X = V,\quad Y = E \quad \text{ and }
  \quad A = \Big\{ \{v_i, e_j\}\in X\times Y \mid v_i \in e_j \Big\}.
\]
If $H\in \mathcal{H}_k(\boldsymbol{d})$ then $B(H)\in\mathcal{B}(\dvec,k)$.
Conversely, every bipartite graph $B = (X \cup Y,A)$
corresponds to a hypergraph $H_B = (V,E)$, where $V = X$
and
\[ E = \{\mathcal{N}_B(y) \mid y\in Y\}.\]
Furthermore, $H_B$ is simple if and only if every node in $Y$ has a distinct
set of neighbours in $B$; that is, if $\mathcal{N}_B(y_i)= \mathcal{N}_B(y_j)$ implies $i=j$.
If $H_B$ is simple then we say that the bipartite graph $B$ is \textit{H-simple}.

Write $\mathcal{B}^*(\dvec,k)$ for the set of all H-simple
half-regular bipartite graphs, and define
 $\varphi : \mathcal{B}^*(\boldsymbol{d}, k) \rightarrow \mathcal{H}_k(\boldsymbol{d})$ as the canonical mapping that maps $B$ to
 $H_B$, as described above.
 We can use rejection sampling to turn any sampling algorithm
for $\mathcal{B}(\dvec,k)$ into a sampling algorithm for
$\mathcal{H}_k(\dvec)$, as follows:

\begin{center}
\hspace{\dimexpr-\fboxrule-\fboxsep\relax}\fbox{
\begin{minipage}{0.8\textwidth}
\begin{tabbing}
    \textsc{HypergraphSampling}$(\dvec,k,\mathcal{A})$\\
    XX\= XX\= XX\=  \kill
    \textsc{Input}: \>\>\, Parameters $(\dvec,k)$;\\
            \>\> algorithm $\mathcal{A}$ for sampling from
                $\mathcal{B}(\dvec,k)$.\\
    \emph{Begin}\\
    \> \emph{repeat}\\
    \>\>  sample $B$ from $\mathcal{B}(\dvec,k)$ using $\mathcal{A}$\\
    \>\> \emph{until}\,  $B$ is H-simple\\
    \> output $\varphi(B)$\\
    \emph{end.}
\end{tabbing}
\end{minipage}}
\end{center}

\noindent Note that for all $H\in\mathcal{H}_k(\boldsymbol{d})$ we have
$|\varphi^{-1}(H)|=m!$, as there are $m!$ distinct ways to label the
edges of $H$ when $H$ is simple. Hence, if $\mathcal{A}$ samples
\emph{uniformly} from $\mathcal{B}(\dvec,k)$ then the output of
\textsc{HypergraphSampling} is  uniform over $\mathcal{H}_k(\dvec)$.

The expected number of times \textsc{HypergraphSampling} draws a
sample from $\mathcal{B}(\dvec,k)$, using algorithm $\mathcal{A}$,
depends on the proportion of bipartite graphs in $\mathcal{B}(\dvec,k)$
which are H-simple: that is, the ratio
$|\mathcal{B}^*(\dvec,k)|/|\mathcal{B}(\dvec,k)|$.
The goal of our work is to identify pairs $(\dvec,k)$ for which
$|\mathcal{B}(\dvec,k)|/|\mathcal{B}^*(\dvec,k)|$ is bounded above
by a polynomial in $n$.
For such pairs, if the output of $\mathcal{A}$ is close to uniform then
this implies that the expected number of times
we run $\mathcal{A}$ before an H-simple element of
$\mathcal{B}(\dvec,k)$ is found is at most polynomial. We also identify pairs 
$(\dvec, k)$ for which the ratio is bounded above by a constant, in which case only a 
constant number of calls to $\mathcal{A}$ are expected. 
This is made more specific in the next subsection.

\subsection{Our contributions}\label{subsec:our_contributions}

The total variation distance between two probability distributions $\sigma$, $\pi$
on a set $\Omega$ is given by
\begin{equation}
    d_{TV}(\sigma,\pi)
    = \frac{1}{2}\sum_{x\in\Omega} |\sigma(x) - \pi(x)|
    = \max_{ S\subseteq \Omega}\, |\sigma(S) - \pi(S)|.
\label{dTV}
\end{equation}
Suppose that the distribution of the output of algorithm
$\mathcal{A}$ on $\mathcal{B}(\dvec,k)$ is $\sigma_\mathcal{B}$, and
let $\sigma_{\mathcal{H}}$ be the output of the algorithm
\textsc{HypergraphSampling}.  Then $\sigma_{\mathcal{H}}$ is a
distribution on $\mathcal{H}_k(\dvec)$ which is obtained by setting
$H=\varphi(B)$ where $B$ has distribution $\sigma_{\mathcal{B}}$,
conditioned on $B\in\mathcal{B}^*(\dvec,k)$.
To make it clear which distribution we are using, we write
$\mathbb{P}_{\sigma}$ for the probability mass function of the distribution $\sigma$.
Let $\pi_{\mathcal{B}}$ be the uniform distribution on
$\mathcal{B}(\boldsymbol{d},k)$, and let $\pi_{\mathcal{H}}$ be the
uniform distribution on $\mathcal{H}_k(\boldsymbol{d})$.

A \emph{fully-polynomial almost uniform sampler (FPAUS)} for sampling from
a set $\Omega$ is an algorithm that, with probability at least $\nfrac{3}{4}$,
outputs an element of $\Omega$ in time polynomial in $\log|\Omega|$ and $\log(1/\varepsilon)$, such that the output distribution is \emph{$\varepsilon$-close} to the
uniform distribution $\pi$ on $\Omega$ in \emph{total variation distance}:
that is, $d_{TV}(\sigma,\pi) \leq \varepsilon$.
If $\Omega = \mathcal{B}(\dvec,k)$ or $\Omega = \mathcal{H}_k(\dvec)$ then
$\log |\Omega| = O(M\log M)$ for $M = \sum_{i=1}^n d_i$:  this follows from~\cite[Theorem 1.3]{GMW},
restated below as Theorem~\ref{thm:zero_one_matrix_count_precise}.
So an FPAUS for $\mathcal{H}_k(\dvec)$ or $\mathcal{B}(\dvec,k)$ must
have running time bounded above by a polynomial in $M$
and $\log(1/\varepsilon)$.

Denote by $\rho(\dvec,k)$ the runtime of an algorithm for testing H-simplicity, when
run on inputs from $\mathcal{B}(\dvec,k)$.  
Given $B\in\mathcal{B}(\dvec,k)$, we can test whether $\varphi(B)$ is
H-simple 
by creating the edges of $\varphi(B)$ and sorting them lexicographically,
requiring $O(M\log M)$ time and $O(M)$ space. (Better implementations may
be possible.)   
Hence we can assume that $\rho(\dvec,k) = O(M\log M)$, and that the
H-simplicity test also creates the hypergraph $\varphi(B)$.
Our first result summarises the properties of the algorithm
\textsc{HypergraphSampling}$\ (\dvec,k,\mathcal{A})$ in terms of
the output distribution and runtime of~$\mathcal{A}$, and the runtime
$\rho(\dvec, k)$ of the H-simplicity test.
The proof of Theorem~\ref{thm:algorithm}, which is fairly standard, is presented in Section~\ref{sec:rejection_sampling}.

\begin{theorem}
\label{thm:algorithm}
Suppose that $n$ is a positive integer, $\boldsymbol{d}=(d_1,\ldots, d_n)$ is a
sequence of positive integers, and $k$ is a positive integer such
that $k \geq 3$ and $\mathcal{B}^*(\boldsymbol{d},k)$ is non-empty.
\begin{itemize}
\item[\emph{(i)}]
The output distribution $\sigma_{\mathcal{H}}$ of
\textsc{HyperGraphSampling} satisfies
  \[
    d_{TV}\paren*{ \sigma_{\mathcal{H}}, \pi_{\mathcal{H}} }
      \leq \frac{3}{2}
        \cdot \frac{
  d_{TV}\left( \sigma_{\mathcal{B}},\pi_{\mathcal{B}}\right)
        }{
          \mathbb{P}_{\pi_{\mathcal{B}}} \paren*{\mathcal{B}^*(\boldsymbol{d},k)
          }
        }.
  \]
\item[\emph{(ii)}]
The expected runtime of
\textsc{HypergraphSampling}$(\dvec,k,\mathcal{A})$ is 
  \[ 
    q(\dvec,k)\, \Big( \tau(\dvec,k) + \rho(\dvec, k) \Big) 
  \] 
where $q(\dvec,k)^{-1} = \mathbb{P}_{\sigma_{\mathcal{B}}} \paren*{
\mathcal{B}^*(\dvec,k)}$ is the probability that a sampled bipartite graph is 
H-simple and 
$\tau(\dvec,k)$ is the (expected) runtime of algorithm $\mathcal{A}$ on 
$\mathcal{B}(\dvec,k)$. 
Furthermore, the probability that
\textsc{HyperGraphSampling} needs more than $t\, q(\dvec,k)$
iterations of $\mathcal{A}$ before finding an element of
$\mathcal{B}^*(\dvec,k)$ is at most $\exp(-t)$\, for any $t>0$.
\item[\emph{(iii)}]
Suppose that $d_{TV}\left( \sigma_{\mathcal{B}},\pi_{\mathcal{B}}\right) \leq \varepsilon$
and
         $\mathbb{P}_{\pi_{\mathcal{B}}} \paren*{\mathcal{B}^*(\boldsymbol{d},k)}\geq 1-c_0$\,
for some $\varepsilon \in (0,1)$ and $c_0\in (0,1-\varepsilon)$. Then
  \[ 
    d_{TV}(\sigma_{\mathcal{H}},\pi_{\mathcal{H}}) \leq \frac{3\varepsilon}{2(1-c_0)}
  \]
and the expected runtime of
\textsc{HypergraphSampling}$(\dvec,k,\mathcal{A})$ is at most
  \[ 
    (1-c_0-\varepsilon)^{-1}\, \Big(\tau(\dvec,k) + \rho(\dvec, k) \Big). 
  \]
\item[\emph{(iv)}]
If $\mathcal{A}$ is an FPAUS for $\mathcal{B}(\dvec,k)$ and the assumptions of (iii) 
hold, then we can transform \textsc{HypergraphSampling}$(\dvec,k,\mathcal{A})$ into an 
FPAUS for $\mathcal{H}_k(\dvec)$ by terminating after
\mbox{$\lceil 2(1-c_0-\varepsilon)^{-1}\rceil$} calls to $\mathcal{A}$
and reporting \emph{\texttt{FAIL}}.
\end{itemize}
\end{theorem}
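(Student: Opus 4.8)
The plan is to treat the four parts in order, deriving (iii) and (iv) quickly from (i) and (ii). For (i) write $A=\mathcal{B}^*(\dvec,k)$, $\sigma=\sigma_{\mathcal{B}}$, $\pi=\pi_{\mathcal{B}}$, and let $\sigma_A,\pi_A$ denote these distributions conditioned on $A$. Since $\varphi$ is a deterministic map with $|\varphi^{-1}(H)|=m!$ for every $H$, the uniform law $\pi_A$ on $A$ pushes forward under $\varphi$ to $\pi_{\mathcal{H}}$, whereas $\sigma_{\mathcal{H}}=\varphi_*\sigma_A$ by construction; and because applying a deterministic map cannot increase total variation distance (for any image event $S$, $|\varphi_*\mu(S)-\varphi_*\nu(S)|=|\mu(\varphi^{-1}S)-\nu(\varphi^{-1}S)|\le d_{TV}(\mu,\nu)$), it suffices to bound $d_{TV}(\sigma_A,\pi_A)$. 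I would establish the conditioning inequality $d_{TV}(\sigma_A,\pi_A)\le\tfrac{3}{2}\,d_{TV}(\sigma,\pi)/\pi(A)$ starting from
\[ 2\,d_{TV}(\sigma_A,\pi_A)=\frac{1}{\sigma(A)\pi(A)}\sum_{x\in A}\bigl|\pi(A)\sigma(x)-\sigma(A)\pi(x)\bigr|, \]
then using the identity $\pi(A)\sigma(x)-\sigma(A)\pi(x)=\sigma(A)(\sigma(x)-\pi(x))+(\pi(A)-\sigma(A))\sigma(x)$ to split the sum into a piece bounded by $\sigma(A)\cdot 2\,d_{TV}(\sigma,\pi)$ and a piece bounded by $|\pi(A)-\sigma(A)|\,\sigma(A)$, and dividing by $\sigma(A)\pi(A)$. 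The step I expect to be the main obstacle is securing the sharp constant $\tfrac32$ rather than $2$: this requires bounding $|\pi(A)-\sigma(A)|$ by $d_{TV}(\sigma,\pi)$ through the maximum-over-events expression \eqref{dTV}, not by the looser $\sum_{x\in A}|\sigma(x)-\pi(x)|$.

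For (ii), each iteration independently draws $B\sim\sigma_{\mathcal{B}}$ and stops exactly when $B\in\mathcal{B}^*(\dvec,k)$, so the number of iterations $N$ is geometric with success probability $p=\mathbb{P}_{\sigma_{\mathcal{B}}}(\mathcal{B}^*(\dvec,k))=q(\dvec,k)^{-1}$ and mean $q(\dvec,k)$. As $\{N\ge i\}$ depends only on iterations $1,\dots,i-1$, it is independent of the cost of iteration $i$, which is one call to $\mathcal{A}$ (expected cost $\tau(\dvec,k)$) plus one H-simplicity test (cost $\rho(\dvec,k)$); Wald's identity then gives expected runtime $q(\dvec,k)\,(\tau(\dvec,k)+\rho(\dvec,k))$. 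For the tail, since $N$ is geometric with success probability $p=1/q$, the standard estimate gives $\mathbb{P}(N>tq)\le(1-p)^{tq}\le e^{-ptq}=e^{-t}$, using $1-p\le e^{-p}$ and $pq=1$.

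Part (iii) is a direct specialisation: the distance bound drops out of (i) on inserting $d_{TV}(\sigma_{\mathcal{B}},\pi_{\mathcal{B}})\le\varepsilon$ and $\mathbb{P}_{\pi_{\mathcal{B}}}(\mathcal{B}^*)\ge 1-c_0$, while for the runtime I would lower-bound the success probability by $\mathbb{P}_{\sigma_{\mathcal{B}}}(\mathcal{B}^*)\ge\mathbb{P}_{\pi_{\mathcal{B}}}(\mathcal{B}^*)-d_{TV}(\sigma_{\mathcal{B}},\pi_{\mathcal{B}})\ge(1-c_0)-\varepsilon$ (again using the max-over-events bound on the discrepancy), which is positive precisely because $c_0<1-\varepsilon$; then $q(\dvec,k)\le(1-c_0-\varepsilon)^{-1}$ and (ii) finishes it.

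For (iv), truncate after $L=\lceil 2(1-c_0-\varepsilon)^{-1}\rceil$ calls. The key point is that conditioning on not reporting \texttt{FAIL} leaves the output law undistorted: conditional on the first H-simple sample arriving at trial $j\le L$, that sample is distributed as $\sigma_{\mathcal{B}}$ restricted to $\mathcal{B}^*(\dvec,k)$ independently of $j$, so the conditional output equals $\sigma_{\mathcal{H}}$ and lies within $3\varepsilon/(2(1-c_0))$ of $\pi_{\mathcal{H}}$ by (iii). The \texttt{FAIL} probability is $(1-p)^L$ with $p\ge 1-c_0-\varepsilon$, which is at most $e^{-2}<\tfrac14$ since $L\ge 2/(1-c_0-\varepsilon)$ and $(1-x)^{1/x}\le e^{-1}$, so an element of $\mathcal{H}_k(\dvec)$ is produced with probability above $3/4$; and the runtime is at most $L\,(\tau(\dvec,k)+\rho(\dvec,k))$, which is polynomial. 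Because $\mathcal{A}$ is an FPAUS it attains any accuracy $\varepsilon$ at cost polynomial in $\log(1/\varepsilon)$, so choosing $\varepsilon$ small enough that $3\varepsilon/(2(1-c_0))$ meets the target makes the truncated procedure an FPAUS for $\mathcal{H}_k(\dvec)$; verifying simultaneously that the composite failure probability stays below $1/4$ and that the accuracy and runtime dependencies remain polynomial is the bookkeeping that needs the most care here.
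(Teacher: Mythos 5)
Your proposal is correct and follows essentially the same route as the paper: the heart of part (i) is the identical two-term split (a term bounded by $2\,d_{TV}(\sigma_{\mathcal{B}},\pi_{\mathcal{B}})$ plus a normalisation term bounded via the max-over-events form of \eqref{dTV} applied to $S=\mathcal{B}^*(\dvec,k)$), yielding the same constant $\tfrac32$, and parts (ii)--(iv) match the paper's geometric-distribution and truncation arguments. The only cosmetic difference is that you first invoke the data-processing inequality to reduce to the conditioned bipartite distributions, whereas the paper carries out the same algebra directly on the hypergraph side using the partition $\{\varphi^{-1}(H)\}$ of $\mathcal{B}^*(\dvec,k)$.
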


\bigskip

\begin{remark}
For Theorem~\ref{thm:algorithm}(iv) to provide an FPAUS with an
\emph{explicit} upper bound on the runtime, explicit bounds are
needed on $\varepsilon$, $c_0$ and $\rho(\dvec,k)$. However, for
Theorem~\ref{thm:algorithm}(iii) it is enough to know that
sufficiently small values of $c_0$, $\varepsilon$ exist. 
\end{remark}

\begin{remark}  
We require $O(M)$ space to store a bipartite graph $B\in\mathcal{B}(\dvec,k)$,
or the corresponding hypergraph $\varphi(B)$. It follows that 
the space complexity of \textsc{HypergraphSampling} is bounded above by the sum 
of $O(M)$ and the 
space complexity of the chosen algorithm $\mathcal{A}$. 
\end{remark}


\bigskip

We see from Theorem~\ref{thm:algorithm} that
$d_{TV}(\sigma_{\mathcal{B}},\pi_{\mathcal{B}})$ and
$\mathbb{P}_{\pi_{\mathcal{B}}} \paren*{\mathcal{B}^*(\dvec,k)}$
are the two crucial quantities which control both the expected runtime of
\textsc{HypergraphSampling}$(\dvec,k,\mathcal{A})$, and how far the output varies
from uniform.
The first of these quantities is determined by the choice of algorithm $\mathcal{A}$.
In our next two theorems, we provide a lower bound on
$\mathbb{P}_{\pi_{\mathcal{B}}} \paren*{\mathcal{B}^*(\dvec,k)}$
when $\dvec = (d,d,\ldots, d)$ is regular, and give
an asymptotic lower bound on this probability which holds when $\dvec$
is irregular but sparse.

\begin{theorem}
  \label{thm:main_regular}
  Let $n \in \N$, $d \in \N$ and $k \geq 3$ so that $m = nd/k \in \N$ and
$\binom{m}{2} < \binom{n}{k}$.  Then
  \[
    \mathbb{P}_{\pi_{\mathcal{B}}} \paren*{\mathcal{B}^*(n, d, k)}
    \geq 1 - \binom{m}{2}\binom{n}{k}^{-1}.
  \]
Hence Theorem~\emph{\ref{thm:algorithm}(iii)} applies when $\binom{m}{2} \leq c_0\binom{n}{k}$
for some $c_0\in (0,1-\varepsilon)$, where
$d_{TV}\left( \sigma_{\mathcal{B}},\pi_{\mathcal{B}}\right) \leq \varepsilon$.
\end{theorem}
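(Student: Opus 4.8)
The plan is a first-moment (union-bound) argument. A graph $B \in \mathcal{B}(n,d,k)$ fails to be H-simple precisely when two distinct $Y$-nodes have the same neighbourhood, so
\[ 1 - \mathbb{P}_{\pi_{\mathcal{B}}}(\mathcal{B}^*(n,d,k)) = \mathbb{P}_{\pi_{\mathcal{B}}}(B \text{ is not H-simple}) \le \sum_{1 \le i < j \le m} \mathbb{P}_{\pi_{\mathcal{B}}}(\mathcal{N}_B(y_i) = \mathcal{N}_B(y_j)). \]
Each neighbourhood $\mathcal{N}_B(y_j)$ is a $k$-subset of $X$ since $B$ is half-regular. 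Because all $Y$-degrees are equal, any permutation of the $Y$-nodes is a measure-preserving bijection of $(\mathcal{B}(n,d,k),\pi_{\mathcal{B}})$, so every summand equals the single quantity $p := \mathbb{P}_{\pi_{\mathcal{B}}}(\mathcal{N}_B(y_1) = \mathcal{N}_B(y_2))$; as there are $\binom{m}{2}$ summands, it suffices to prove $p \le \binom{n}{k}^{-1}$.

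To estimate $p$ I would pass to counts. For $0 \le j \le k$ let $h(j)$ be the number of $B \in \mathcal{B}(n,d,k)$ with $\mathcal{N}_B(y_1) = S_0$ and $\mathcal{N}_B(y_2) = T_0$ for a fixed pair of $k$-subsets with $|S_0 \cap T_0| = j$; since all $X$-degrees equal $d$, permuting $X$ (which acts transitively on pairs of $k$-subsets of a given intersection size and preserves $\mathcal{B}(n,d,k)$) shows this count is independent of the chosen pair, depending only on $j$. The same symmetry makes the marginal law of a single neighbourhood uniform over $k$-subsets. Writing $N = |\mathcal{B}(n,d,k)|$ and grouping pairs $(S,T)$ by intersection size via $\sum_{j=0}^{k} \binom{k}{j}\binom{n-k}{k-j} = \binom{n}{k}$, one obtains
\[ p = \frac{\binom{n}{k}\,h(k)}{N} = \frac{h(k)}{\sum_{j=0}^{k}\binom{k}{j}\binom{n-k}{k-j}\,h(j)}. \]
By the same Vandermonde identity, the target bound $p \le \binom{n}{k}^{-1}$ is equivalent to $\sum_{j=0}^{k}\binom{k}{j}\binom{n-k}{k-j}\,(h(j) - h(k)) \ge 0$, which holds as soon as $h(k) = \min_{0 \le j \le k} h(j)$. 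In other words, it suffices to show that two $Y$-nodes sharing their \emph{entire} neighbourhood is the rarest of the joint configurations.

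The main obstacle is this last inequality, a negative-correlation statement: conditioning on $\mathcal{N}_B(y_1) = S$ lowers the residual degree of each node of $S$ from $d$ to $d-1$, so $\mathcal{N}_B(y_2)$ is biased away from $S$ and the maximally overlapping choice $\mathcal{N}_B(y_2) = S$ should be least likely. I would make this precise through a switching argument establishing the single-step monotonicity $h(j+1) \le h(j)$ for $0 \le j < k$. Given $B$ with $|\mathcal{N}_B(y_1) \cap \mathcal{N}_B(y_2)| = j+1$, choose by a fixed canonical rule a node $a \in \mathcal{N}_B(y_1) \cap \mathcal{N}_B(y_2)$ and a node $b \notin \mathcal{N}_B(y_1) \cup \mathcal{N}_B(y_2)$, replace the edge $\{a,y_2\}$ by $\{b,y_2\}$ (dropping the overlap to $j$), and restore the degrees of $a$ and $b$ by a compensating swap on a third $Y$-node $y_\ell$ with $b \in \mathcal{N}_B(y_\ell)$ and $a \notin \mathcal{N}_B(y_\ell)$. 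Checking that such a repair node exists and that the canonical choices make the map injective is routine switching bookkeeping, but it is where the real work lies. Granting $h(k) = \min_j h(j)$, the first two paragraphs give $p \le \binom{n}{k}^{-1}$ and hence $\mathbb{P}_{\pi_{\mathcal{B}}}(\mathcal{B}^*(n,d,k)) \ge 1 - \binom{m}{2}\binom{n}{k}^{-1}$; the hypothesis $\binom{m}{2} < \binom{n}{k}$ is exactly what makes this lower bound positive, and the closing assertion of the theorem follows since the bound is at least $1 - c_0$ whenever $\binom{m}{2} \le c_0 \binom{n}{k}$.
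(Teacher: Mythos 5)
Your reduction is sound and in fact mirrors the paper's strategy: the union bound over pairs, the symmetry arguments giving a uniform marginal and making $h$ depend only on the intersection size $j$, and the reduction via Vandermonde to the single inequality $h(k)=\min_{0\le j\le k} h(j)$ are all correct, and that inequality is exactly what the paper establishes (its quantity $\abs{\mathcal{B}(\boldsymbol{d}^{\mathcal{U}},\kvec^{\mathcal{U}})}$ is your $h(\abs{\mathcal{U}\cap\mathcal{W}})$). The gap is that you have deferred precisely the step that carries all the difficulty. Passing from intersection size $j+1$ to $j$ changes the residual degree sequence by sending one degree $d$ down to $d-1$ and one degree $d-2$ up to $d-1$, so $h(j+1)\le h(j)$ is a genuine ``balancing increases the count'' statement about bipartite realisations, and your proposed switching does not routinely deliver it. The forward operation (move $y_2$'s edge from $a$ to $b$, then repair at some $y_\ell$ with $b\in\mathcal{N}_B(y_\ell)$, $a\notin\mathcal{N}_B(y_\ell)$) has at least $2$ and at most $d$ available repair nodes, while the reverse operation has up to $d-1$; the standard double-counting therefore yields only $2\,h(j+1)\le (d-1)\,h(j)$, which is far weaker than $h(j+1)\le h(j)$. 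A ``canonical choice'' of $y_\ell$ does not obviously rescue injectivity either, since after the switch one cannot tell from the image graph which repair node was the canonical one in the preimage. So what you call routine bookkeeping is the theorem's real content.

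The paper avoids this by proving a separate sensitivity result (its Proposition~\ref{prop:balanced}): fixing the two distinguished $X$-nodes $x_g,x_h$, it partitions $\mathcal{B}(\dvec,\kvec)$ according to the set $S(B)$ of edges at $Y$-nodes adjacent to both or to neither of $x_g,x_h$, observes that each class has size exactly
\[
\binom{d_g+d_h-2\abs{R_2(S_0)}}{\,d_g-\abs{R_2(S_0)}\,},
\]
and then compares binomial coefficients, using that they increase toward the centre. This exact decomposition is what replaces your switching argument; if you want to complete your proof along your own lines, you need either to reproduce an argument of this kind or to construct a genuine injection, not an approximate switching count.
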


\begin{remark} 
When $k \geq 3$ is a fixed constant, the lower bound in
Theorem~\ref{thm:main_regular} is $1-o(1)$ if $d = o(n^{k/2-1})$.
\end{remark}

\bigskip

\begin{theorem}
\label{thm:main_irregular}
Assume that for each positive
integer $n$ we have an integer $k=k(n)\geq 3$ and a sequence
$\boldsymbol{d} =\boldsymbol{d}(n) = (d_1, \ldots, d_n)$ of positive
integers such that $M = \sum_{i=1}^n d_i$ tends to infinity
with $n$.  Suppose that $k$ divides $M$ for infinitely many values of $n$.
Assume that
$k^2\dmax^2 = o(M)$ 
and let $\pi_{\mathcal{B}}$ be the uniform distribution on
$\mathcal{B}(\boldsymbol{d},k)$. Then
  \[
    \mathbb{P}_{\pi_{\mathcal{B}}} \paren*{
      \mathcal{B}^*(\boldsymbol{d}, k)
    }
    \geq 1 -
      \frac{n^k \, \dmax^k}{M^k}
      \cdot \binom{M/k}{2} \, \binom{n}{k}^{-1}
      \cdot (1 + o(1)).
  \]
Writing $m=M/k$ and $d = M/n$, we see that Theorem~\ref{thm:algorithm}(iii) applies when
\begin{equation}
\label{eq:c0}
 \left(\frac{\dmax}{d}\right)^k\, \binom{m}{2} \leq c_0 \binom{n}{k}
\end{equation}
for some $c_0\in (0,1-\varepsilon)$, where
$d_{TV}\left( \sigma_{\mathcal{B}},\pi_{\mathcal{B}}\right) \leq \varepsilon$.
\end{theorem}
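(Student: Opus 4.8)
The plan is to control the failure probability $1-\mathbb{P}_{\pi_{\mathcal{B}}}(\mathcal{B}^*(\boldsymbol{d},k))$ by a first moment argument, using that $B$ is H-simple exactly when the $m$ neighbourhoods $\mathcal{N}_B(y_1),\dots,\mathcal{N}_B(y_m)$ are pairwise distinct. Each $y_j$ has degree $k$, so each $\mathcal{N}_B(y_j)$ is a $k$-subset of $X$, and by the symmetry of $\mathcal{B}(\boldsymbol{d},k)$ under relabelling of $Y$ every pair collides with the same probability. A union bound therefore gives
\[
  1-\mathbb{P}_{\pi_{\mathcal{B}}}\!\paren*{\mathcal{B}^*(\boldsymbol{d},k)}
  \;\le\; \binom{m}{2}\,\mathbb{P}_{\pi_{\mathcal{B}}}\!\paren*{\mathcal{N}_B(y_1)=\mathcal{N}_B(y_2)},
\]
and it remains to show the per-pair collision probability is at most $\tfrac{n^k\dmax^k}{M^k}\binom{n}{k}^{-1}(1+o(1))$.

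Next I would decompose this collision probability over the common neighbourhood. Fixing a $k$-set $S\subseteq X$, the graphs with $\mathcal{N}_B(y_1)=\mathcal{N}_B(y_2)=S$ are in bijection with $\mathcal{B}(\boldsymbol{d}^{(S)},k)$ on $m-2$ columns, where $\boldsymbol{d}^{(S)}$ replaces $d_i$ by $d_i-2$ for $i\in S$ (the term being $0$ when this is infeasible). Hence
\[
  \mathbb{P}_{\pi_{\mathcal{B}}}\!\paren*{\mathcal{N}_B(y_1)=\mathcal{N}_B(y_2)}
  = \sum_{S}\frac{\bigl|\mathcal{B}(\boldsymbol{d}^{(S)},k)\bigr|}{\bigl|\mathcal{B}(\boldsymbol{d},k)\bigr|}.
\]
Feeding both cardinalities into the sparse enumeration formula of \cite{GMW} (restated as Theorem~\ref{thm:zero_one_matrix_count_precise}), whose leading term is $M!/(\prod_i d_i!\,(k!)^m)$, the ratio reduces to
\[
  \frac{\bigl|\mathcal{B}(\boldsymbol{d}^{(S)},k)\bigr|}{\bigl|\mathcal{B}(\boldsymbol{d},k)\bigr|}
  = \frac{(M-2k)!}{M!}\,(k!)^2\prod_{i\in S}d_i(d_i-1)\,(1+o(1))
  = M^{-2k}(k!)^2\prod_{i\in S}d_i(d_i-1)\,(1+o(1)).
\]
Summing $\prod_{i\in S}d_i(d_i-1)$ over all $k$-subsets $S$ yields the $k$-th elementary symmetric polynomial of the numbers $d_i(d_i-1)$, which by Maclaurin's inequality is at most $\binom{n}{k}\bigl(n^{-1}\sum_i d_i(d_i-1)\bigr)^k\le\binom{n}{k}(\dmax M/n)^k$, using $\sum_i d_i(d_i-1)\le(\dmax-1)M$. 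Thus the per-pair probability is at most $M^{-2k}(k!)^2\binom{n}{k}(\dmax M/n)^k(1+o(1))=\tfrac{(k!)^2\binom{n}{k}\dmax^k}{M^k n^k}(1+o(1))$, and since $k!\binom{n}{k}=n(n-1)\cdots(n-k+1)\le n^k$ this is at most $\tfrac{n^k\dmax^k}{M^k}\binom{n}{k}^{-1}(1+o(1))$. Multiplying by $\binom{m}{2}$ gives the stated bound on the failure probability, hence the theorem. The final assertion then follows on writing $M=nd$, so that $n^k\dmax^k/M^k=(\dmax/d)^k$: condition~\eqref{eq:c0} is exactly what forces the failure bound below $c_0$ once the $1+o(1)$ is absorbed, so Theorem~\ref{thm:algorithm}(iii) applies.

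The crux is the claim that the two applications of the enumeration formula differ only by the ratio of their leading factors, up to $1+o(1)$ uniformly over $S$; this is where $k^2\dmax^2=o(M)$ enters. Deleting two degree-$k$ columns and lowering two entries of a row-degree sequence by $2$ changes $\sum_i\binom{d_i}{2}$ by $O(k\dmax)$, changes $\sum_j\binom{k}{2}$ by $O(k^2)$, and changes $M$ by $O(k)$; since the exponential correction is governed by a term of order $(\sum_i\binom{d_i}{2})(\sum_j\binom{k}{2})/M^2$, a first-order expansion shows each induced change in the exponent is $O(k^2\dmax/M)=o(1)$, so the correction factors cancel to $1+o(1)$. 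I would also verify that the hypotheses of Theorem~\ref{thm:zero_one_matrix_count_precise} hold simultaneously for $\boldsymbol{d}$ and for every feasible $\boldsymbol{d}^{(S)}$ — again secured by $k^2\dmax^2=o(M)$ — and that $(M-2k)!/M!=M^{-2k}(1+o(1))$; these are routine. The one point needing genuine care is making the $o(1)$ in the enumeration ratio uniform over the exponentially many sets $S$ so that it survives the summation, which is handled by bounding the change in the exponent by its worst case $O(k^2\dmax/M)$, independent of $S$.
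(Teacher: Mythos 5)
Your argument is correct and leads to the stated bound. It rests on the same underlying tool as the paper---Theorem~\ref{thm:zero_one_matrix_count_precise} applied to the degree sequence obtained by deleting columns with prescribed neighbourhoods, with the exponential corrections cancelling to $1+o(1)$ because the exponent shifts by $O(k^2\dmax/M)$---but the surrounding decomposition is genuinely different. The paper routes everything through Lemma~\ref{lem:simple_graph_probability}: it bounds the marginal $\mathbb{P}_{\pi_{\mathcal{B}}}(\mathcal{N}_B(y)=\mathcal{W})$ by its worst case $c_1\binom{n}{k}^{-1}$ with $c_1=\frac{n^k\dmax^k}{M^k}(1+o(1))$ (using $\prod_{x_i\in\mathcal{W}}d_i\le\dmax^k$ for each fixed $\mathcal{W}$), and separately shows the conditional-to-unconditional ratio is $c_2=1+o(1)$ via a second application of the enumeration formula to $\dvec''$. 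You instead delete both columns in one step, keep the sum over the common neighbourhood $S$ exact, and control $\sum_S\prod_{i\in S}d_i(d_i-1)$ by Maclaurin's inequality applied to the elementary symmetric polynomial; this replaces the worst-case-over-$\mathcal{W}$ bound and the negative-correlation constant $c_2$ by a single averaging step. The two routes give the same final bound here (both ultimately invoke $M_2\le\dmax M$), but yours is structurally a pure first-moment calculation and is in principle slightly sharper for very skewed sequences, since the symmetric-function bound retains $\big(M_2/(nM/n)\big)$-type averages rather than $\dmax^k$; the paper's factorisation, on the other hand, isolates the reusable correlation estimate $c_2=1+o(1)$, which is what makes Lemma~\ref{lem:simple_graph_probability} applicable verbatim in the regular case as well. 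Your two points of care---uniformity of the $1+o(1)$ over the $\binom{n}{k}$ sets $S$, and the vanishing of terms with some $d_i=1$ in $S$---are exactly the right ones and are handled correctly.
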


\bigskip

In the hypergraph setting, $m$ is the number of edges and $d$ is the average degree
of any hypergraph in $\mathcal{H}_k(\dvec)$.

\begin{remark}
\label{rem:irreg}
When $k\geq 3$ is a fixed constant, the lower bound in
Theorem~\ref{thm:main_irregular} is $1-o(1)$ if $\dmax = o(M^{1-2/k})$.
Similarly, if $k\geq 3$ is a fixed constant and $\dmax=O(d)$
then the lower bound in Theorem~\ref{thm:main_irregular} is $1-o(1)$
whenever $d^2 = o(n^{k-2})$, as in the regular case.
\end{remark}

\bigskip

There are several algorithms $\mathcal{A}$ in the literature
for efficiently sampling
from $\mathcal{B}(\dvec,k)$, either uniformly or almost
uniformly, under various conditions on $\dvec$ and $k$.
These will be reviewed in Section~\ref{s:bipartite-samplers}, together
with the properties of the resulting algorithm
\textsc{HypergraphSampling}($\dvec, k,\mathcal{A})$.
In Section~\ref{s:sampling-hypergraphs} we discuss the configuration
model for hypergraphs, which can be used as an expected polynomial-time
sampling algorithm  when $k\dmax = O(\log n)$.

Then in Section \ref{sec:rejection_sampling} we provide a general framework which we use to analyse the algorithm \textsc{HypergraphSampling}. In Section \ref{sec:H-simple} we fill in the details for the regular regime (Theorem \ref{thm:main_regular}) and the irregular, sparse regime (Theorem \ref{thm:main_irregular}).

\section{Related work}\label{s:related}

\subsection{Various bipartite sampling algorithms, and implications}
\label{s:bipartite-samplers}

In this section we describe several algorithms for efficient
sampling from $\mathcal{B}(\dvec,k)$, uniformly or almost uniformly,
under various conditions on $\dvec$, $k$. We also apply
Theorem~\ref{thm:algorithm} to describe when
\textsc{HypergraphSampling}($\dvec, k,\mathcal{A})$ is an efficient
algorithm for sampling from $\mathcal{H}_k(\dvec)$ (uniformly or
near-uniformly), for each choice of $\mathcal{A}$.
We focus on the time complexity of these algorithms, as most authors
have not stated the space complexity of their algorithms.

\bigskip

\noindent The first two algorithms mentioned below perform \emph{exactly uniform} sampling from $\mathcal{B}(\dvec,k)$.

\begin{itemize}
\item[(I)] If $k\dmax \leq C\log n$ then the bipartite configuration model
gives rise to an algorithm for sampling (exactly) uniformly from
$\mathcal{B}(\dvec,k)$.  But the bipartite configuration model is
equivalent to the configuration model for hypergraphs, described
in Section~\ref{s:sampling-hypergraphs}, and so there
is no advantage to working in the bipartite graph setting
when $k\dmax\leq C\log n$.
(See Lemma~\ref{lem:hypergraph-config}).
\item[(II)] Next suppose that $(\dmax + k)^4 = O(M)$.
Building on the work of~\cite{GW,MW90},
Arman, Gao and Wormald~\cite[Theorem~4]{AGW} describe an algorithm
which samples uniformly from $\mathcal{B}(\dvec,k)$ with expected runtime $O(M)$
and space complexity $O(nM)$.
Note that $d_{TV}(\sigma_{\mathcal{B}},\pi_{\mathcal{B}})=0$ as
the output is exactly uniform.  Using this algorithm as $\mathcal{A}$
and applying Theorem~\ref{thm:algorithm},
we see that  \textsc{HypergraphSampling}($\dvec, k,\mathcal{A})$
performs exact sampling from $\mathcal{H}_k(\dvec)$
with expected runtime $O(M + \rho(\dvec,k))$ whenever (\ref{eq:c0}) holds for some
constant $c_0\in (0,1)$.  In particular, this holds whenever $(\dvec,k)$
are as described in Remark~\ref{rem:irreg}.
\end{itemize}

\noindent The next algorithm produces output which is \emph{asymptotically uniform},
meaning that the output distribution
is only $o(1)$ from uniform in total variation distance.

\begin{itemize}
\item[(III)] If $\dmax + k = O(M^{1/4-\tau})$ for some positive constant $\tau$
then the algorithm from~\cite{AGW} can be applied, as described in (II).
 An alternative is to use the sampling algorithm of
Bayati, Kim and Saberi~\cite[Theorem~1]{BKS}, which has expected runtime $O(\dmax M)$
(see the proof of~\cite[Theorem 1]{BKS}).  The output of this algorithm
satisfies $d_{TV}(\sigma_{\mathcal{B}},\pi_{\mathcal{B}}) = o(1)$,
where this vanishing term depends only on $n$ and cannot be made smaller
by increasing the runtime of the algorithm.  Hence we can take
$\varepsilon = o(1)$ in Theorem~\ref{thm:algorithm}, and conclude that
for this choice of $\mathcal{A}$, the algorithm
\textsc{HypergraphSampling}($\dvec, r,\mathcal{A})$ has expected
runtime $O(\dmax M + \rho(\dvec,k))$ whenever (\ref{eq:c0}) holds for some constant
$c_0\in (0,1)$, and the distribution of the output is within $o(1)$
of uniform:  that is, $d_{TV}(\sigma_{\mathcal{H}},\pi_{\mathcal{H}})=o(1)$.
\end{itemize}

\medskip

\begin{remark}
Although the Arman, Gao and Wormald algorithm applies for a slightly
wider range of values of $(\dvec,k)$, has a better expected runtime
bound and performs exactly uniform sampling, the Bayati, Kim and
Saberi algorithm has one advantage: it is much easier to implement.
Indeed, Bayati, Kim and Saberi~\cite[Theorem~3]{BKS} used sequential importance
sampling to give an algorithm which is close to an FPAUS,
except that the runtime is polymial in $1/\varepsilon$, while in
an FPAUS the dependence should be on $\log(1/\varepsilon)$.
However, this algorithm
is valid only when $\dmax = O(M^{1/4- \tau})$ for some $\tau >0$
and no longer has the advantage of simplicity, and so it is
surpassed by the fast, precisely uniform sampling algorithm
of Arman, Gao and Wormald~\cite{AGW}, described in (II) above.
(Other authors, such as Chen et al.~\cite{chen}, have
 used sequential importance sampling to sample
bipartite graphs with given degrees, but without rigorous analysis.)
\end{remark}

\medskip

Now we survey some algorithms which are FPAUSs for $\mathcal{B}(\dvec, k)$.
Each can be used as $\mathcal{A}$ to give an FPAUS for
$\mathcal{H}_k(\dvec)$, using Theorem~\ref{thm:algorithm}(iv),
so long as (\ref{eq:c0}) holds
(or $\binom{m}{2}\leq c_0\binom{n}{k}$ when $\dvec$ is regular)
for some $c_0\in (0,1-\varepsilon)$.
The runtime of each of these algorithms is the mixing time of their underlying Markov chain multiplied by the cost of performing a single step of the chain, the latter being an implementation-dependent cost.
In all cases, the polynomial bound on the mixing time is quite a high-degree
polynomial and is not believed to be tight.  We do not always state
the mixing time.

\begin{itemize}
\item[(IV)]
Jerrum, Sinclair and Vigoda~\cite{JSV} described and analysed a simulated
annealing algorithm which gives an FPAUS for sampling perfect
matchings from a given bipartite graph. As a corollary, they obtained
an FPAUS for sampling bipartite graphs for given degrees~\cite[Corollary~8.1]{JSV}.
Bez{\'a}kov{\'a}, Bhatnagar and Vigoda \cite{BBV} adapted the algorithm
from~\cite{JSV} to provide a simplified
FPAUS for $\mathcal{B}(\dvec,k)$, valid for any $(\dvec,k)$,
with running time
\[ O((nm)^2 M^3 \Delta\, \log^4(nm/\varepsilon)), \]
where $n$ and $m$ are the number of nodes in each part of the bipartition,
and $\Delta = \max\{\dmax,k\}$.

\item[(V)]
Another well-studied Markov chain for sampling (bipartite) graphs
with given degrees is the switch Markov chain. It is the
simplest Markov chain which walks on the set of all (bipartite)
graphs with a given degree sequence, as it deletes and replaces only two
edges at a time. 
The chain has been used in many contexts, including
contingency tables~\cite{DG}, and was first applied to
bipartite graphs by Kannan, Tetali and Vempala \cite{KTV}.
The mixing time of the switch chain has been shown
to be polynomial for various bipartite and
general degree sequences, see for example \cite{AK,CDG,hungarians,GS,MES,TY2020preprint}.
If a Markov chain leads to an FPAUS then we say that the Markov chain
is rapidly mixing.
In particular
\begin{itemize}
\item Tikhomirov and Youssef \cite{TY2020preprint} considered the switch Markov chain on regular bipartite graphs (in which $d = k$), and proved a sharp Poincar\'e inequality and an optimal upper bound on the log-Sobolev constant for the switch chain. With these results, they demonstrated that for some constant $c > 0$, when $3 \leq d \leq n^c$ the mixing time of the switch Markov chain on $d$-regular bipartite graphs is at most
  \[
    O \left( dn(dn \log dn + \log(2\varepsilon^{-1}) \right).
  \]
This is a significant improvement on the previously known bounds in other settings.
\item Miklos, Erd\H{o}s and
Soukup \cite{MES} showed that the switch Markov chain is rapidly
mixing for half-regular bipartite degree sequences (in which one
part of the bipartition has regular degrees).  An explicit polynomial
bound is not clearly stated.
\item
Cooper, Dyer and Greenhill \cite{CDG} considered regular graphs
and showed that the switch chain is rapidly mixing on the set of all
$d$-regular graphs, for any $d=d(n)$.
Their analysis was extended by Greenhill and Sfragara~\cite{GS} who
adapted the proof to sparse, irregular degree sequences.
Neither of these works explicitly treated bipartite graphs,
though the arguments in both papers are simpler when restricted to
bipartite graphs.
In Corollary~\ref{cor:switch_chain_mixing_semiregular}
we state an upper bound on the mixing time of the switch chain
on $\mathcal{B}(\dvec,k)$ which arises from the arguments of~\cite{CDG,GS}
when $3 \leq \dmax \leq \frac{1}{3}\sqrt{M}$.
Specifically we show that in this range, the switch chain
has mixing time
\[ \Delta^{10} M^7 \big( \nfrac{1}{2} M\log(M) + \log(\varepsilon^{-1})\big)\]
where $\Delta = \max\{ \dmax,\, k\}$.
\item Jerrum and Sinclair~\cite{JS} defined a notion of P-stability
for degree sequences.  Roughly speaking, a degree sequence $\dvec$ is
P-stable if small perturbations to $\dvec$ only change the number of
realisations (graphs with degree sequence $\dvec$) by a small amount.
The notion of P-stable can be adapted to bipartite graphs.
Amanatidis and Kleer~\cite{AK} defined a possibly stronger notion,
\emph{strong stability}, and showed that the switch chain
is rapidly mixing for any strongly stable degree sequence,
and for any strongly stable bipartite degree sequence.
Erd\H{o}s et al.~\cite{hungarians} proved that the switch chain is
rapidly mixing for any P-stable class of bipartite degree sequences.
\end{itemize}
\item[(VI)]The Curveball chain~\cite{Verhelst}
 is another Markov chain for sampling bipartite
graphs with given degrees, in which multiple switches are performed
simultaneously.  Carstens and Kleer~\cite{CK2018} showed that the
Curveball chain is rapidly mixing whenever the switch chain is
rapidly mixing.
\end{itemize}

\bigskip

\begin{remark}
We have focussed on uniform hypergraphs, but our approach can be
adapted to non-uniform hypergraphs.  Given a vector $\kvec=(k_1,\ldots, k_m)$
which stores the desired edge sizes, let $m_{\ell}$ be the number
of edges of size $\ell$, that is,
\[ m_\ell = |\{ j\in [m] \, : \, k_j = \ell\}|\]
for $\ell\geq 2$.
Each hypergraph $H$
on $[n]$ with edge sizes given by $\kvec$ and with degree sequence $\dvec$
corresponds to exactly $\prod_{\ell=2}^n m_\ell!$ bipartite
graphs from $\mathcal{B}(\dvec,\kvec)$, as now we must restrict
to edge labellings $e_1,\ldots, e_m$ so that $|e_j|=k_j$ for $j=1,\ldots, m$.
All of the bipartite graph sampling algorithms mentioned in this
section generalise to $\mathcal{B}(\dvec,\kvec)$, with the exception
of the result by Miklos, Erd\H{o}s and Soukup regarding the switch chain
for half-regular bipartite graphs~\cite{MES}.
\end{remark}

\subsection{Sampling hypergraphs using the configuration model}\label{s:sampling-hypergraphs}

To the best of our knowledge, the only rigorously-analysed algorithm for sampling
hypergraphs with given degrees is the \emph{configuration model}.
The analogue of the configuration model for
hypergraphs has been used by various authors, for example, in the
study of random hypergraphs~\cite{CFMR} and for asymptotic
enumeration~\cite{DFRS}. In the configuration model corresponding
to $\mathcal{H}_k(\dvec)$ there are
$n$ objects, called cells, and the $i$th cell contains $d_i$ (labelled) points.
A \emph{configuration} is a partition of the
$M = \sum_{i=1}^n d_i$ points into $M/k$ parts, each
containing $k$ points.  A random configuration can be chosen in $O(M)$ time.
Shrinking each cell to a node gives an
$k$-uniform hypergraph which may contain loops (that is, an edge
containing a node more than once) or repeated edges.   If the resulting hypergraph
is not simple then the configuration is rejected and a new random configuration
is sampled.
We say that a configuration is simple if the corresponding hypergraph is simple.

Hence, the configuration model can be used for efficient sampling when
the probability that a randomly chosen configuration is simple
is bounded below by the inverse of some polynomial in $n$.
This implies that the expected number of trials before a simple configuration
is found is at most polynomial.

It follows from asymptotic results of
Dudek, Frieze, Ruci{\' n}ski and {\v S}ileikis~\cite{DFRS}
that when $\dvec = (d,d,\ldots, d)$ is regular and $k\geq 3$ is constant,
the expected number of trials before a simple configuration is sampled is
 \[ \exp\,\paren*{\nfrac{1}{2}(k-1)(d-1) + o(1)} \]
assuming that $k=3$ and $d = d(n)=o(n^{1/2})$, or $k\geq 4$ and $d = d(n)=o(n)$.
(Asymptotics are as $n\to\infty$, restricted to values of $n$ such that $dn$ is divisible by
$k$.)
For irregular degrees, let $M_2 = M_2(\dvec) = \sum_{j=1}^n d_j(d_j-1)$.
It follows from~\cite[Corollary~2.3]{BG2} that
the expected number of trials before a simple configuration is sampled is
\[
  \exp\,\paren*{ \frac{(k-1)M_2}{2M} + o(1)}
 \leq \exp\,\paren*{ \nfrac{1}{2}(k-1)(\dmax -1) + o(1) }
\]
whenever
$k^4\dmax^3 = o(M)$.  Here $k=k(n)$ and $\dvec = \dvec(n)$ are such that
$k$ divides $M$ for infinitely many values of $n$.
We collect these facts together into the following lemma.

\begin{lemma}
\label{lem:hypergraph-config}
The configuration model gives an efficient algorithm for sampling
uniformly from $\mathcal{H}_k(\dvec)$ whenever $kd_{\max} = O(\log n)$.
If $kd_{\max} \leq C\log n$ for some constant $C>0$ then
the expected runtime of this algorithm for $\mathcal{H}_k(\dvec)$
is $O(n^C\, M) = O(\dmax\, n^{C+1})$.
If $kd = o(\log n)$ then the expected runtime of this algorithm is $O(M) = O(\dmax n)$.
Note $d_{TV}(\sigma_{\mathcal{H}},\pi_{\mathcal{H}})=0$ as
the output is exactly uniform.
\end{lemma}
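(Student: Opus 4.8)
The plan is to read the lemma off the two asymptotic facts recalled immediately above it, together with a bound on the cost of a single iteration and the standard exact-uniformity property of the configuration model. Throughout, write $p_{\mathrm{sim}}$ for the probability that a uniformly random configuration is simple. Since the algorithm draws independent configurations until a simple one appears, the number of iterations is geometric with mean $1/p_{\mathrm{sim}}$; this is exactly the ``expected number of trials'' quantified in the cited estimates. For the exact-uniformity claim I would observe that every $H\in\mathcal{H}_k(\dvec)$ is the image of exactly $\prod_{i=1}^{n} d_i!$ simple configurations (for each node $v_i$ one assigns its $d_i$ labelled points bijectively to the $d_i$ distinct edges through it), a count independent of $H$. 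Hence the output distribution, being the uniform distribution on configurations conditioned on simplicity, pushes forward to the uniform distribution on $\mathcal{H}_k(\dvec)$, giving $d_{TV}(\sigma_{\mathcal{H}},\pi_{\mathcal{H}})=0$.

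Next I would bound $1/p_{\mathrm{sim}}$. By the estimate quoted above from \cite{BG2},
\[
  \frac{1}{p_{\mathrm{sim}}}
   = \exp\!\paren*{\frac{(k-1)M_2}{2M}+o(1)}
   \leq \exp\!\paren*{\tfrac12(k-1)(\dmax-1)+o(1)},
\]
valid whenever $k^4\dmax^3=o(M)$, where $M_2=\sum_{j=1}^{n} d_j(d_j-1)$. The first thing to check is that this validity hypothesis is implied by $k\dmax=O(\log n)$: then $k\leq C\log n$ and $\dmax\leq C\log n$, while $M\geq n$ since all degrees are positive, so $k^4\dmax^3=O((\log n)^7)=o(n)=o(M)$ and the estimate applies (and likewise $\dvec$ falls inside the ranges $d=o(n^{1/2})$ for $k=3$, $d=o(n)$ for $k\geq 4$, needed for the sharper regular constant of \cite{DFRS}). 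A single iteration costs $O(M)$ to draw a configuration plus $O(M\log M)$ to build the edge list and test simplicity by sorting, so the expected runtime is $\big(1/p_{\mathrm{sim}}\big)\cdot O(M\log M)$.

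In the regime $k\dmax\leq C\log n$ I would bound the exponent by $\tfrac12(k-1)(\dmax-1)\leq \tfrac12 k\dmax\leq \tfrac{C}{2}\log n$, so that $1/p_{\mathrm{sim}}\leq n^{C/2}(1+o(1))$. Multiplying by the per-iteration cost $O(M\log M)$ yields $O(n^{C/2}M\log M)$; since $M\leq n\dmax=O(n\log n)$ we have $\log M=O(\log n)=o(n^{C/2})$, and this simplifies to $O(n^{C}M)$, which equals $O(\dmax n^{C+1})$ because $M\leq n\dmax$. In the regime $kd=o(\log n)$ (with $d=M/n$, reducing to $d=\dmax$ in the regular case handled by \cite{DFRS}) the exponent $\tfrac12(k-1)(d-1)\leq\tfrac12 kd$ is $o(\log n)$, the expected number of iterations is $O(1)$, and the runtime collapses to $O(M)=O(\dmax n)$.

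The one genuinely substantive step is verifying that $k\dmax=O(\log n)$ lands inside the range of validity $k^4\dmax^3=o(M)$, so that the quoted enumeration asymptotics may legitimately be invoked; everything else is bookkeeping. The only point requiring care is to ensure that the additive $o(1)$ inside the exponent, together with the $O(\log M)$ cost of the simplicity test, is absorbed into the multiplicative polynomial factor $n^{C}$ (using the slack between $n^{C/2}$ and $n^{C}$) rather than leaking an extra logarithmic factor into the final runtime; in the $kd=o(\log n)$ regime the same exponent bound makes the iteration count subpolynomial, and constant once $kd=O(1)$.
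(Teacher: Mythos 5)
Your overall route is the same one the paper intends: the paper states this lemma without a separate proof, as a direct collection of the facts quoted just before it (the DFRS estimate in the regular case and \cite[Corollary~2.3]{BG2} in the irregular case), combined with the geometric trials argument, the $O(M)$ cost of drawing a configuration, and the standard observation that each simple $H$ has exactly $\prod_i d_i!$ simple configurations mapping to it, so that conditioning on simplicity yields the exact uniform distribution. Your verification that $k\dmax = O(\log n)$ implies $k^4\dmax^3 = O((\log n)^7) = o(M)$ (using $M\geq n$) is exactly the check the paper leaves implicit, and your bookkeeping for the $O(n^C M)$ bound, absorbing the $O(\log M)$ simplicity-test factor into the slack between $n^{C/2}$ and $n^C$, is sound.

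There is one step that does not go through as you wrote it: in the regime $kd = o(\log n)$ you assert that the expected number of iterations is $O(1)$. But $\tfrac12(k-1)(d-1) = o(\log n)$ only gives an iteration count of $\exp(o(\log n)) = n^{o(1)}$, which is subpolynomial but can still tend to infinity (take $kd \sim \sqrt{\log n}$). You in fact notice this in your closing sentence ("constant once $kd=O(1)$") but do not reconcile it with the earlier claim; as written, the two statements contradict each other, and the $O(M)$ runtime conclusion only follows when $kd = O(1)$ (and, strictly, only if the simplicity test is implemented in $O(M)$ rather than $O(M\log M)$ time). To be fair, this imprecision is present in the lemma statement itself, so your proposal inherits rather than introduces it; but a careful write-up should either restrict that clause to $kd=O(1)$ or state the runtime as $n^{o(1)}M$.
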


Gao and Wormald~\cite{GW} built on earlier work of McKay and
Wormald~\cite{MW90} to give a fast algorithm for exactly uniform sampling
of $d$-regular graphs.  Using a recent improvement of Arman, Gao and
Wormald~\cite{AGW}, a uniformly random $d$-regular graph on $n$
vertices can be generated in expected runtime $O(dn + d^4)$
whenever $d=o(n^{1/2})$,
and a random graph with degree sequence $\dvec$ can be generated in
runtime $O(M)$ whenever $\dmax^4 = O(M)$.  It is likely that this approach
could be adapted to the problem of sampling hypergraphs uniformly.

\section{Analysis of \textsc{HypergraphSampling}}
\label{sec:rejection_sampling}

First we prove Theorem~\ref{thm:algorithm}.

\begin{proof}[Proof of Theorem~\ref{thm:algorithm}]
To prove (i), observe that by definition,
  \begin{align*}
    \sigma_{\mathcal{H}}(H)
  = \sum_{B \in \varphi^{-1}(H)} \mathbb{P}_{\sigma_{\mathcal{B}}}
        \paren*{B \mid B \in \mathcal{B}^*(\boldsymbol{d}, k)}
= \frac{
          1
        }{
          \mathbb{P}_{\sigma_{\mathcal{B}}} \paren*{\mathcal{B}^*(\boldsymbol{d},k)}
        }
        \cdot \sum_{B \in \varphi^{-1}(H)} \sigma_{\mathcal{B}}(B).
  \end{align*}
This equality also holds with $\sigma_{\mathcal{H}}$,
$\sigma_{\mathcal{B}}$ replaced by $\pi_{\mathcal{H}}$,
$\pi_{\mathcal{B}}$, respectively. Since the set of all preimages
$\{ \varphi^{-1}(H)\ :\  H\in\mathcal{H}_k(\dvec)\}$ forms a
partition of $\mathcal{B}^*(\dvec,k)$, and using the triangle
inequality, we have
  \begin{align*}
    d_{TV}\left( \sigma_{\mathcal{H}}, \pi_{\mathcal{H}} \right)
      & = \dfrac{1}{2} \sum_{H \in \mathcal{H}_k(\boldsymbol{d})}
        \abs*{ \,\sigma_{\mathcal{H}}(H) - \pi_{\mathcal{H}}(H) \, } \\
      & \leq
        \dfrac{1}{2}
        \sum_{H \in \mathcal{H}_k(\boldsymbol{d})} \,
        \sum_{B \in \varphi^{-1}(H)} \,\, \abs*{
          \frac{
            \sigma_{\mathcal{B}}(B)
          }{
            \mathbb{P}_{\sigma_{\mathcal{B}}} \paren*{\mathcal{B}^*(\boldsymbol{d},k)}
          }
          - \frac{
            \pi_{\mathcal{B}}(B)
          }{
            \mathbb{P}_{\pi_{\mathcal{B}}} \paren*{\mathcal{B}^*(\boldsymbol{d},k)}
          }
        } \\
      & \leq
        \dfrac{1}{2}
        \sum_{B \in \mathcal{B}^*(\boldsymbol{d},k)} \,\, \abs*{
          \frac{
            \sigma_{\mathcal{B}}(B)
          }{
            \mathbb{P}_{\pi_{\mathcal{B}}} \paren*{\mathcal{B}^*(\boldsymbol{d},k)}
          }
          - \frac{
            \pi_{\mathcal{B}}(B)
          }{
            \mathbb{P}_{\pi_{\mathcal{B}}} \paren*{\mathcal{B}^*(\boldsymbol{d},k)}
          }
        } \\
        &  \ \ \ \ \ \ \ \ \ \ \ \ \ \  + \dfrac{1}{2}
        \sum_{B \in \mathcal{B}^*(\boldsymbol{d},k)} \,\, \abs*{
          \frac{
            \sigma_{\mathcal{B}}(B)
          }{
            \mathbb{P}_{\sigma_{\mathcal{B}}} \paren*{\mathcal{B}^*(\boldsymbol{d},k)}
          }
          - \frac{
            \sigma_{\mathcal{B}}(B)
          }{
            \mathbb{P}_{\pi_{\mathcal{B}}} \paren*{\mathcal{B}^*(\boldsymbol{d},k)}
          }
        } \\
      &  \leq
        \frac{
          d_{TV}\left(\sigma_{\mathcal{B}},\pi_{\mathcal{B}}\right)
        }{
          \mathbb{P}_{\pi_{\mathcal{B}}} \paren*{
            \mathcal{B}^*(\boldsymbol{d},k)
          }
        }
        + \dfrac{1}{2}
        \cdot \abs*{
            \frac{
              1
            }{
              \mathbb{P}_{\sigma_{\mathcal{B}}} \paren*{\mathcal{B}^*(\boldsymbol{d},k)}
            }
            - \frac{
              1
            }{
              \mathbb{P}_{\pi_{\mathcal{B}}} \paren*{\mathcal{B}^*(\boldsymbol{d},k)}
            }
          }
        \sum_{B \in \mathcal{B}^*(\boldsymbol{d},k)}\,\,
          \sigma_{\mathcal{B}}(B)
        \\
      & \leq \dfrac{3}{2}
        \cdot \frac{d_{TV}\left(\sigma_{\mathcal{B}},\pi_{\mathcal{B}}\right)}{\mathbb{P}_{\pi_{\mathcal{B}}} \paren*{\mathcal{B}^*(\boldsymbol{d},k)}}.
\end{align*}
The final inequality follows from applying (\ref{dTV}) with $S=\mathcal{B}^*(\boldsymbol{d},k)$.

Next, (ii) is immediate as the number of times that
\textsc{HyperSampling} calls $\mathcal{A}$ has a geometric
distribution with mean $q(\dvec, k)$, and we test whether $B$ is H-simple once
for every call of $\mathcal{A}$. 
We assume here that the test for H-simplicity also creates the hypergraph 
$\varphi(B)$, so there is no additional cost for the final (output) step of
the algorithm. 
Then, (\ref{dTV}) implies
that
\[ q(\dvec,k)^{-1} = \mathbb{P}_{\sigma_{\mathcal{B}}}\paren*{\mathcal{B}^*(\boldsymbol{d},k)}
                   \geq \mathbb{P}_{\pi_{\mathcal{B}}}\paren*{\mathcal{B}^*(\boldsymbol{d},k)} - \varepsilon,
\]
and (iii) follows.

Finally, suppose that
$\mathcal{A}$ is an FPAUS for $\mathcal{B}(\dvec,k)$ and (\ref{eq:c0})
holds for some $c_0\in (0,1-\varepsilon)$.
It follows from~(ii) and~(iii) that the probability that
\textsc{HypergraphSampling}($\dvec, k,\mathcal{A})$ performs
more than $\lceil 2(1-c_0-\varepsilon)^{-1}\rceil$ iterations of
$\mathcal{A}$ is at most $e^{-2}\leq \nfrac{1}{4}$.
Therefore, terminating
\textsc{HypergraphSampling}($\dvec, k,\mathcal{A})$ after
this many calls to $\mathcal{A}$ gives an FPAUS for $\mathcal{H}_k(\dvec)$.
To achieve a total variation of $\varepsilon$ from
\textsc{HypergraphSampling}($\dvec, k,\mathcal{A})$, the algorithm
$\mathcal{A}$ should be given input $\varepsilon' = 2\varepsilon\, (1-c_0)/3$,
by~(i).
\end{proof}

\bigskip

A general approach for bounding
$\mathbb{P}_{\pi_{\mathcal{B}}} \paren*{ \mathcal{B}^*(\dvec,k)}$
is given by the following lemma.
The constant $c_1$ in Lemma \ref{lem:simple_graph_probability}
captures the maximum edge probability relative to the uniform case
(in which every neighborhood is equally likely).
If $c_1$ is large then some neighbourhoods are much more likely under $\sigma_{\mathcal{B}}$
than they would be under the uniform distribution.
The extent to which the events ``$\mathcal{N}_B(y) = \mathcal{W}$'' are
negatively correlated, as $y$ varies over $Y$ for fixed $\mathcal{W}\subseteq X$,
is described by the constant $c_2$.
Intuitively, if the degree sequence is close to regular then
we expect both $c_1$ and $c_2$ to be close to one.

\begin{lemma}
  \label{lem:simple_graph_probability}
  Suppose that $k \geq 3$ is an integer, $\boldsymbol{d}$ is a sequence of nonnegative integers such that $\mathcal{B}(\boldsymbol{d}, k)$ is non-empty, and that $B = (X \cup Y, A) \in \mathcal{B}(\boldsymbol{d}, k)$ is a random bipartite graph according to the uniform distribution $\pi_{\mathcal{B}}$. Then, suppose that there are constants $c_1$ and $c_2$ such that for any $y, y' \in Y$ and any subset $\mathcal{W} \subseteq X$ of size $k$,
  \[
    \mathbb{P}_{\pi_{\mathcal{B}}} \paren*{\mathcal{N}_B(y) = \mathcal{W}}
    \leq c_1 \cdot \binom{n}{k}^{-1}
  \]
and
  \[
    \mathbb{P}_{\pi_{\mathcal{B}}} \paren*{
      \mathcal{N}_B(y')= \mathcal{W} \ | \
      \mathcal{N}_B(y) = \mathcal{W}
    }
    \leq c_2
      \cdot \mathbb{P}_{\pi_{\mathcal{B}}} \paren*{\mathcal{N}_B(y') = \mathcal{W}}.
  \]
  Then
  \[
    \mathbb{P}_{\pi_{\mathcal{B}}} \paren*{
      \mathcal{B}^*(\boldsymbol{d}, k)
    }
    \geq 1 - c_1 c_2 \,\binom{m}{2} \, \binom{n}{k}^{-1}.
  \]
\end{lemma}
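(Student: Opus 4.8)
The plan is to bound the probability of the complementary event --- that $B$ is \emph{not} H-simple --- by a union bound over the pairs of nodes in $Y$ that could share a neighbourhood. Recall from the discussion preceding the lemma that $B \in \mathcal{B}(\boldsymbol{d}, k)$ fails to be H-simple precisely when there exist distinct $y, y' \in Y$ with $\mathcal{N}_B(y) = \mathcal{N}_B(y')$. Since $B$ is half-regular, every neighbourhood $\mathcal{N}_B(y)$ is automatically a $k$-subset of $X$, which lets me decompose each coincidence event over the possible common value $\mathcal{W}$.

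First I would write the union bound
\[
  \mathbb{P}_{\pi_{\mathcal{B}}}\paren*{ B \notin \mathcal{B}^*(\boldsymbol{d},k) }
  \leq \sum_{\{y,y'\} \subseteq Y}
    \mathbb{P}_{\pi_{\mathcal{B}}}\paren*{ \mathcal{N}_B(y) = \mathcal{N}_B(y') }
\]
over the $\binom{m}{2}$ unordered pairs. For a fixed pair I would expand over the common neighbourhood,
\[
  \mathbb{P}_{\pi_{\mathcal{B}}}\paren*{ \mathcal{N}_B(y) = \mathcal{N}_B(y') }
  = \sum_{\substack{\mathcal{W} \subseteq X \\ |\mathcal{W}| = k}}
    \mathbb{P}_{\pi_{\mathcal{B}}}\paren*{ \mathcal{N}_B(y) = \mathcal{W},\ \mathcal{N}_B(y') = \mathcal{W} }.
\]

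The key step is to factor each joint probability as a conditional times a marginal and then apply the two hypotheses \emph{asymmetrically}. Writing the joint term as $\mathbb{P}_{\pi_{\mathcal{B}}}(\mathcal{N}_B(y') = \mathcal{W} \mid \mathcal{N}_B(y) = \mathcal{W}) \cdot \mathbb{P}_{\pi_{\mathcal{B}}}(\mathcal{N}_B(y) = \mathcal{W})$, I would bound the conditional factor using the $c_2$ hypothesis by $c_2\, \mathbb{P}_{\pi_{\mathcal{B}}}(\mathcal{N}_B(y') = \mathcal{W})$, and bound the remaining marginal $\mathbb{P}_{\pi_{\mathcal{B}}}(\mathcal{N}_B(y) = \mathcal{W})$ using the $c_1$ hypothesis by $c_1 \binom{n}{k}^{-1}$. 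This gives
\[
  \mathbb{P}_{\pi_{\mathcal{B}}}\paren*{ \mathcal{N}_B(y) = \mathcal{W},\ \mathcal{N}_B(y') = \mathcal{W} }
  \leq c_1 c_2 \binom{n}{k}^{-1}\, \mathbb{P}_{\pi_{\mathcal{B}}}\paren*{ \mathcal{N}_B(y') = \mathcal{W} }.
\]
Summing over all $k$-subsets $\mathcal{W}$, the retained marginals collapse, since $\sum_{\mathcal{W}} \mathbb{P}_{\pi_{\mathcal{B}}}(\mathcal{N}_B(y') = \mathcal{W}) = 1$ by half-regularity (the node $y'$ always has \emph{some} $k$-subset as its neighbourhood). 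Hence $\mathbb{P}_{\pi_{\mathcal{B}}}(\mathcal{N}_B(y) = \mathcal{N}_B(y')) \leq c_1 c_2 \binom{n}{k}^{-1}$ for each pair, and combining with the union bound over $\binom{m}{2}$ pairs and taking complements yields the claimed lower bound on $\mathbb{P}_{\pi_{\mathcal{B}}}(\mathcal{B}^*(\boldsymbol{d},k))$.

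The part requiring care --- and the reason the bound comes out as $c_1 c_2$ rather than $c_1^2 c_2$ --- is precisely this asymmetric application: I invoke the pointwise $c_1$ estimate on only \emph{one} of the two neighbourhoods and deliberately keep the other as an unconditioned marginal, so that summation over $\mathcal{W}$ absorbs it into the constant $1$. Bounding both factors by $c_1\binom{n}{k}^{-1}$ before summing would spend an extra factor of $c_1$ and weaken the result. I expect this bookkeeping to be the only real (and still modest) obstacle; the union bound and the complement step are entirely routine.
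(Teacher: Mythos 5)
Your proposal is correct and follows essentially the same route as the paper's proof: a union bound over the $\binom{m}{2}$ pairs, a decomposition over the common $k$-subset $\mathcal{W}$, factoring the joint probability into a conditional times a marginal, applying the $c_2$ hypothesis to the conditional and the $c_1$ hypothesis to exactly one marginal, and summing the remaining marginal to $1$ over $\mathcal{W}$. The only (immaterial) difference is which of the two marginals you hit with the $c_1$ bound versus which you leave to be absorbed by the sum over $\mathcal{W}$.
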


\begin{proof}
Let $B$ be an element from the set $\mathcal{B}(\boldsymbol{d},k)$
drawn uniformly at random, and let $X^{\{k\}}$ be the set of all
$k$-subsets of $X$. We omit the subscript $\pi_{\mathcal{B}}$ on all
following probabilities. For $1 \leq s < t \leq m$, we define the
random variable
  \[
    Z_{st}
    = \left\{
      \begin{array}{ll} 1 &
        \text{ if } \mathcal{N}_B(y_s) = \mathcal{N}_B(y_t) \\
      0 & \text{ otherwise,}
    \end{array}\right.
  \]
for $y_s, y_t \in Y$, and we let
  \[
    Z = \sum_{1 \leq s < t \leq m} Z_{st}
  \]
be the random variable denoting the number of pairs of nodes
$(y_s,y_t)$ that have the same neighborhood in $B$. Note that
  \begin{equation}
    \label{eqn:irregular_simple_relate_x}
    \mathbb{P} \left(B \in \mathcal{B}^*(\boldsymbol{d}, k) \right)
    = 1 - \mathbb{P}(Z \geq 1).
  \end{equation}
Using the union bound over all possible pairs of nodes
$(y_s,y_t)$,
  \begin{equation*}
    \mathbb{P}(Z \geq 1)
      \leq \sum_{1 \leq s < t \leq m} \mathbb{P}(Z_{st} = 1).
  \end{equation*}
Likewise, for a fixed pair of nodes $(y_s,y_t)$, applying the
union bound over ${X}^{\{k\}}$ shows us that
  \begin{equation*}
    \mathbb{P}(Z_{st} = 1)
      \leq \sum_{\mathcal{W} \in {X}^{\{k\}}}
        \mathbb{P}(\mathcal{N}_B(y_s) = \mathcal{N}_B(y_t) = \mathcal{W}).
  \end{equation*}
Now, using the law of total probability,
  \begin{align}
    \mathbb{P}(Z \geq 1)
      & \leq \sum_{1 \leq s < t \leq m} \,\,\sum_{\mathcal{W} \in {X}^{\{k\}}}
        \mathbb{P}(\mathcal{N}_B(y_s) = \mathcal{N}_B(y_t) = \mathcal{W})
        \nonumber \\
      & = \sum_{1 \leq s < t \leq m} \,\,\sum_{\mathcal{W} \in {X}^{\{k\}}}
        \mathbb{P}(\mathcal{N}_B(y_s) = \mathcal{W} \ | \ \mathcal{N}_B(y_t) = \mathcal{W})
          \cdot \mathbb{P}(\mathcal{N}_B(y_t) = \mathcal{W})
        \nonumber \\
      & \leq c_1 c_2 \,\binom{m}{2} \, \binom{n}{k}^{-1}.
        \label{eqn:irregular_simple_px}
  \end{align}
The proof is completed by combining
\eqref{eqn:irregular_simple_relate_x} and
\eqref{eqn:irregular_simple_px}.
\end{proof}\medskip

\section{Probability of H-simplicity}\label{sec:H-simple}

The expected running time of \textsc{HypergraphSampling} is governed
by the runtime of algorithm $\mathcal{A}$ and the probability that a
randomly chosen element of $\mathcal{B}(\boldsymbol{d},k)$ is
H-simple. In Section \ref{sec:irregular} we provide an asymptotic
estimate which holds when $\boldsymbol{d}$ is irregular and sparse.
In Section \ref{sec:regular} we give a combinatorial argument  for
the case of $d$-regular $k$-uniform hypergraphs. In particular, these
sections yield Theorem \ref{thm:main_irregular} and Theorem \ref{thm:main_regular}, respectively.

\subsection{Irregular, sparse degrees}\label{sec:irregular}

In this section we prove a lower bound on the probability that a
uniformly random graph from $\mathcal{B}(\dvec,k)$ is H-simple,
using an asymptotic formula for irregular, sparse bipartite graphs.
Given a bipartite degree sequence $(\boldsymbol{d},\boldsymbol{k})$, for any positive 
integer $r$ define $M_r = \sum_{i=1}^n (d_i)_r$ and $L_r =
\sum_{j=1}^m (k_j)_r$, where $(a)_b = a(a-1)\cdots (a-b+1)$ denotes
the falling factorial. Let $L = L_1$ and $M = M_1$, and note that $L
= M$ for any graphical bipartite degree sequence. We also let
$\dmax = \max_i d_i$ and $\kmax = \max_j k_j$.

The following asymptotic enumeration result is a simpler, but weaker
restatement of the main theorem from~\cite{GMW}, which is slightly
stronger than that of McKay~\cite{McKay84}.
(It follows from Theorem~\ref{thm:zero_one_matrix_count_precise} that
$\mathcal{B}(\dvec,\kvec)\neq \emptyset$ when $\kmax\dmax = o(M^{2/3})$.)

\begin{theorem} \emph{(\cite[Theorem 1.3]{GMW})}\,
\label{thm:zero_one_matrix_count_precise} Suppose that $M\to\infty$,
and that $\dvec = (d_1,\ldots, d_n)$,\, $\kvec =
(k_1,\ldots, k_m)$ are sequences of nonnegative integer functions of
$M$ which both sum to $M$.
If $\kmax \dmax = o(M^{2/3})$ then
\[
    \abs*{\mathcal{B}(\boldsymbol{d},\kvec)}
    = \frac{M!}{\prod_{i=1}^n d_i!\, \prod_{j=1}^m k_j!} \, \exp\,\Bigg(
      -\frac{M_2 L_2}{2M^2} +O\left(\frac{\dmax^2\kmax^2}{M}\right)\Bigg).
\]
\end{theorem}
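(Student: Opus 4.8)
The plan is to establish the formula through the bipartite configuration model together with the switching method, which is the natural tool in the sparse regime $\kmax\dmax = o(M^{2/3})$. First I would set up the model: put $d_i$ labelled points in left cell $i$ and $k_j$ labelled points in right cell $j$, and call a \emph{configuration} a perfect matching of the $M$ left points to the $M$ right points. There are $M!$ configurations, and every simple bipartite graph in $\mathcal{B}(\dvec,\kvec)$ arises from exactly $\prod_{i=1}^n d_i!\,\prod_{j=1}^m k_j!$ of them, by permuting the points within each cell. Hence
\[
  \abs*{\mathcal{B}(\dvec,\kvec)}
  = \frac{M!}{\prod_{i=1}^n d_i!\,\prod_{j=1}^m k_j!}\;
    \mathbb{P}\paren*{\text{a uniform random configuration is simple}},
\]
so the leading factor is already correct and it remains to show that a random configuration is simple with probability $\exp\paren*{-M_2 L_2/(2M^2) + O(\dmax^2\kmax^2/M)}$. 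Because edges run only between left and right cells there are no loops, so a configuration fails to be simple precisely when two of its edges join the same pair of cells.

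Second, I would compute the mean number of such repeated edges. Summing $(d_i)_2(k_j)_2$ over all cell pairs $(i,j)$ and dividing by $2M(M-1)$ (two distinct matched edges, then unordering) gives expected value $\lambda = M_2 L_2/(2M(M-1)) = (M_2L_2/(2M^2))(1 + O(1/M))$; note the $O(1/M)$ correction is $O(\dmax\kmax/M)$ and so is absorbed into the target error. I would also record that configurations containing a triple or higher-multiplicity edge are comparatively rare, their combined contribution being absorbable into the claimed error under the hypothesis on $\kmax\dmax$; this is what confines the correction to the single double-edge term.

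Third comes the core switching argument. Let $C_t$ denote the number of configurations with exactly $t$ double edges and no higher multiplicity. I would define a switching that removes one chosen double edge by rewiring it with a third, generically chosen edge of the configuration, and then count the forward switchings out of a configuration counted by $C_t$ against the inverse switchings into one counted by $C_{t-1}$. Comparing these counts should yield $C_t/C_{t-1} = (\lambda/t)\paren*{1 + O(\dmax\kmax/M)}$. Telescoping from $C_0$ gives $C_t = C_0(\lambda^t/t!)(1 + \cdots)$, and summing over $t$ yields $\sum_t C_t = C_0\, e^{\lambda}(1 + \cdots)$. Since configurations with higher multiplicities are negligible, $\sum_t C_t = M!\,(1+\cdots)$, whence $\mathbb{P}(\text{simple}) = C_0/M! = e^{-\lambda}(1+\cdots)$, which is the desired exponential factor.

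I expect the main obstacle to be the error analysis: showing that the switching ratio equals $\lambda/t$ up to relative error $O(\dmax\kmax/M)$ \emph{uniformly} in $t$, and then controlling the accumulation of these errors as $t$ ranges up to $O(\lambda)$. The deviations arise because the generic third edge may accidentally meet the double edge being removed or one of the other $t-1$ double edges, and because the inverse switching count is sensitive to the local degree structure; bounding each correction by $O(\dmax\kmax/M)$ and using $\lambda \le \dmax\kmax/2$ to get $\lambda\cdot\dmax\kmax/M = O(\dmax^2\kmax^2/M)$ is exactly what produces the stated exponent, and is precisely where the sparsity hypothesis is needed. As a fallback I would keep the analytic route of~\cite{GMW}: extract $\abs*{\mathcal{B}(\dvec,\kvec)}$ as the coefficient of $\prod_i x_i^{d_i}\prod_j y_j^{k_j}$ in $\prod_{i,j}(1+x_iy_j)$ and evaluate by a multidimensional saddle-point estimate, where the constant $M_2L_2/(2M^2)$ emerges from the second-order term of the expansion but the contour integrals demand more delicate control.
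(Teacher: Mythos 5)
The first thing to note is that the paper does not prove this statement at all: it is imported, with attribution and in deliberately weakened form, from [GMW, Theorem 1.3], so the only meaningful comparison is with the proof in that source. Your sketch is in fact a reconstruction of that source's own method --- switchings in the bipartite configuration model, in the tradition of McKay's 1984 paper --- so the route is the right one rather than a novel one. Your reductions are correct as far as they go: simple graphs lift to exactly $\prod_i d_i!\,\prod_j k_j!$ configurations each, so the prefactor is exact and everything rests on $\mathbb{P}(\text{simple})$; bipartiteness rules out loops; the mean number of double edges is $\lambda = M_2L_2/\bigl(2M(M-1)\bigr)$ with the $O(1/M)$ discrepancy from $M_2L_2/(2M^2)$ absorbable; and the bookkeeping $\lambda\cdot O(\dmax\kmax/M) = O(\dmax^2\kmax^2/M)$ is exactly how the stated error term arises from per-switching relative errors, provided $t$ is truncated at $t_{\max}\asymp\lambda$ with a separate geometric tail bound beyond it (you flag this, correctly, as the main technical burden).

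The one step that would fail as written is the dismissal of multiplicity-$\geq 3$ edges as ``comparatively rare \ldots\ absorbable.'' A first-moment computation gives $\mathbb{E}[\#\text{triple edges}] = O(M_3L_3/M^3) = O(\dmax^2\kmax^2/M)$, and under the hypothesis $\kmax\dmax = o(M^{2/3})$ this quantity may \emph{diverge} (it can be as large as $M^{1/3-o(1)}$). In that part of the range, configurations containing triple edges are not rare --- they dominate, and your class $\{C_t\}$ of configurations with no multiplicity above $2$ is an exponentially small fraction of all $M!$ configurations, so the union bound behind ``$\sum_t C_t = M!(1+\cdots)$'' is vacuous: a first moment only bounds $\mathbb{P}(\exists\,\text{triple})$ from above, which is useless once it exceeds $1$. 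What you actually need is a \emph{lower} bound of the form $\sum_t C_t \geq M!\exp\bigl(-O(\dmax^2\kmax^2/M)\bigr)$, and that requires running a second switching analysis that removes triple (and higher-multiplicity) edges, or equivalently a two-parameter induction tracking double and triple edges jointly. This is precisely the extra machinery by which [GMW] pushed McKay's original range up to $\dmax\kmax = o(M^{2/3})$; without it your argument only establishes the theorem in the smaller regime $\dmax^2\kmax^2 = O(M)$, where the first-moment dismissal is legitimate. With that repair, plus the uniform-in-$t$ error control you already identified, the proposal becomes the standard proof of the cited result.
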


Using this enumeration result, we can prove the main result of this
section.   First some useful identities which will be
used in the proof without further comment: if $|\eta| < 1$ then
$\exp(\eta) = 1 + O(\eta)$ and $(1 + \eta)^{-1} = 1 + O(\eta)$.
Also observe that if $k^2=o(M)$ then
\begin{equation}
\label{useful}
\frac{M^k}{(M)_k}\leq \frac{M^k}{(M-k)^k} = \big(1 - k/M\big)^{-k} = \exp\big(O(k^2/M)\big) = 1 + O(k^2/M).
\end{equation}

\begin{theorem}
\label{thm:irregular_edge_correlation} Assume that for each positive
integer $n$ we have an integer $k=k(n)\geq 3$ and a sequence
$\boldsymbol{d} =\boldsymbol{d}(n) = (d_1, \ldots, d_n)$ of positive
integers such that $M = \sum_{i=1}^n d_i$ tends to infinity
with $n$.
Assume that
$k^2\dmax^2 = o(M)$ 
and let $\pi_{\mathcal{B}}$ be the uniform distribution on
$\mathcal{B}(\boldsymbol{d},k)$. Then
  \[
    \mathbb{P}_{\pi_{\mathcal{B}}} \paren*{
      \mathcal{B}^*(\boldsymbol{d}, k)
    }
    \geq 1 -
      \frac{n^k \, \dmax^k}{M^k}
      \cdot \binom{M/k}{2} \, \binom{n}{k}^{-1}
      \cdot (1 + o(1)).
  \]
\end{theorem}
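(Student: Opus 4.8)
The plan is to apply Lemma~\ref{lem:simple_graph_probability}, so the task reduces to producing valid upper bounds $c_1$ and $c_2$ for the two probabilities in its hypotheses and then checking that $c_1 c_2 = (n^k\dmax^k/M^k)(1+o(1))$. Although the lemma is phrased with ``constants'', its proof uses $c_1,c_2$ only as uniform upper bounds, so I am free to let them depend on $n$, provided the bounds hold uniformly over all $\mathcal{W}, y, y'$.

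First I would compute $\mathbb{P}_{\pi_{\mathcal{B}}}(\mathcal{N}_B(y)=\mathcal{W})$ exactly as a ratio of counts. Fixing the neighbourhood of $y$ to be $\mathcal{W}$ and deleting $y$ leaves a uniformly random element of $\mathcal{B}(\dvec^{\mathcal{W}}, k^{(m-1)})$, where $\dvec^{\mathcal{W}}$ lowers $d_i$ by one for each $x_i\in\mathcal{W}$ and $k^{(m-1)}$ denotes $m-1$ copies of $k$; hence the probability equals $|\mathcal{B}(\dvec^{\mathcal{W}},k^{(m-1)})|/|\mathcal{B}(\dvec,k^{(m)})|$. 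Both sequences satisfy $\kmax\dmax \le k\dmax = o(M^{1/2}) = o(M^{2/3})$, so Theorem~\ref{thm:zero_one_matrix_count_precise} applies to numerator and denominator. The leading factorial terms telescope to $k!\,\prod_{x_i\in\mathcal{W}} d_i\,/\,(M)_k$, using $(M-k)!/M! = 1/(M)_k$ and $\prod_i d_i!/\prod_i d_i^{\mathcal{W}}! = \prod_{x_i\in\mathcal{W}} d_i$.

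The delicate part, and where I expect the main effort, is the ratio of the exponential correction factors. Writing $L_2 = m\,k(k-1) = M(k-1)$ for the denominator, $L_2' = (m-1)k(k-1) = (M-k)(k-1)$ for the numerator, and $M_2' = M_2 - 2\sum_{x_i\in\mathcal{W}}(d_i-1)$, the two $\exp(-M_2L_2/(2M^2))$-type terms combine into $\exp\big(\frac{k-1}{2}(M_2/M - M_2'/(M-k))\big)$. A short calculation rewrites the bracket as $(2MS - M_2 k)/(M(M-k))$ with $S = \sum_{x_i\in\mathcal{W}}(d_i-1)\le k\dmax$; using $M_2\le M\dmax$ this is $O(k\dmax/M)$, so the full exponent is $O(k^2\dmax/M) = o(1)$ by the hypothesis $k^2\dmax^2 = o(M)$. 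The enumeration errors $O(\dmax^2 k^2/M)$ are likewise $o(1)$, and all these bounds are uniform in $\mathcal{W}$ and $y$. This yields $\mathbb{P}(\mathcal{N}_B(y)=\mathcal{W}) = (k!\prod_{x_i\in\mathcal{W}} d_i/(M)_k)(1+o(1))$; bounding $\prod_{x_i\in\mathcal{W}} d_i \le \dmax^k$ and using $(n)_k\le n^k$ together with $M^k/(M)_k = 1+o(1)$ from \eqref{useful} gives $c_1 = (n^k\dmax^k/M^k)(1+o(1))$.

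For $c_2$ I would repeat the count-ratio computation one level deeper: conditioning on $\mathcal{N}_B(y)=\mathcal{W}$ and then fixing $\mathcal{N}_B(y')=\mathcal{W}$ lowers the $\mathcal{W}$-degrees by a further one and removes $y'$, giving conditional probability $(k!\prod_{x_i\in\mathcal{W}}(d_i-1)/(M-k)_k)(1+o(1))$, which is trivially zero if some $d_i=1$. Dividing by the unconditional probability from the previous step, the ratio is $\frac{\prod_{x_i\in\mathcal{W}}(d_i-1)}{\prod_{x_i\in\mathcal{W}} d_i}\cdot\frac{(M)_k}{(M-k)_k}(1+o(1))$; the first factor is at most $1$ and the second is $1+O(k^2/M)=1+o(1)$, so $c_2 = 1+o(1)$. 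Feeding $c_1 c_2 = (n^k\dmax^k/M^k)(1+o(1))$ and $m=M/k$ into Lemma~\ref{lem:simple_graph_probability} then produces exactly the claimed bound. The central obstacle throughout is bookkeeping the enumeration error and exponent-difference terms while confirming they are $o(1)$ uniformly over all $\mathcal{W}, y, y'$; the rest is routine telescoping of factorials.
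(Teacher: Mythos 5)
Your proposal is correct and follows essentially the same route as the paper's proof: the same reduction to Lemma~\ref{lem:simple_graph_probability} with $n$-dependent bounds $c_1,c_2$, the same application of Theorem~\ref{thm:zero_one_matrix_count_precise} to the ratio of counts obtained by fixing one or two neighbourhoods (your deletion of $y$ is equivalent to the paper's device of setting $k_s'=0$), and the same estimates $L_2'=(k-1)(M-k)$, $M_2'=M_2-2\sum_{x_i\in\mathcal{W}}(d_i-1)$ showing the exponent difference is $O(k^2\dmax/M)=o(1)$. The only differences are cosmetic bookkeeping choices, so there is nothing further to add.
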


\begin{proof}
Let $m = M / k$, which by assumption is an integer. Then, suppose
that we have some neighbourhood $\mathcal{W} \in X^{\set{k}}$, and
two integers $s, t \in [m]$. We will prove the desired result by
conditioning on the neighbourhoods of $y_s$ and $y_t$ being equal
to $\mathcal{W}$, at which point we can apply Lemma
\ref{lem:simple_graph_probability}. To do this, we first define two
bipartite degree sequences
$(\boldsymbol{d}^{\hspace*{1pt}\prime}, \boldsymbol{k}')$ and
$(\boldsymbol{d}^{\hspace*{1pt}\prime\prime}, \boldsymbol{k}'')$, as follows:
  \[
    d_i^{\hspace*{1pt}\prime} = \left\{
      \begin{array}{ll}
        d_i - 1 & \textrm{ if } x_i \in \mathcal{W}, \\
        d_i & \textrm{ if } x_i \in X \setminus \mathcal{W},
      \end{array}
    \right.
    \quad \text{ and } \ \
    k_{j}' = \left\{
      \begin{array}{ll}
        0 & \textrm{ if } j = s, \\
        k & \textrm{ if } j \in [m] \setminus \set{s},
      \end{array}
    \right.
  \]
and
  \[
    d^{\hspace*{1pt}\prime\prime}_i = \left\{
      \begin{array}{ll}
        d_i - 2 & \textrm{ if } x_i \in \mathcal{W}, \\
        d_i & \textrm{ if } x_i \in X \setminus \mathcal{W},
      \end{array}
    \right.
    \quad \text{ and } \ \
    k_{j}'' = \left\{
      \begin{array}{ll}
        0 & \textrm{ if } j \in \set{s, t}, \\
        k & \textrm{ if } j \in [m] \setminus \set{s,t}.
      \end{array}
    \right.
  \]
We also extend the notation for $M$ and $M_2$ to $\boldsymbol{d}^{\hspace*{1pt}\prime}$
and $\boldsymbol{d}^{\hspace*{1pt}\prime\prime}$ by appending one or two dashes, and likewise
for $L$ and $L_2$. By assumption, $k^2 \dmax^2 = o(M)$, which
implies that $k\dmax = o(M^{1/2})$.
Hence, the conditions of Theorem~\ref{thm:zero_one_matrix_count_precise}
are satisfied, and we can approximate both $\abs*{\mathcal{B}(\boldsymbol{d},\kvec)}$ and
$\abs*{\mathcal{B}(\boldsymbol{d}^{\hspace*{1pt}\prime},\kvec')}$. Considering the ratio of
these, since $d^{\hspace*{1pt}\prime}_i = d_i$ whenever $x_i \notin \mathcal{W}$, and
$k_{j}' = k_{j}$ whenever ${j} \neq s$, many factors cancel, leading
to
  \begin{equation}
    \label{eqn:irregular_edge_correlation_n_unsimplified}
    \frac{
      \abs*{\mathcal{B}(\boldsymbol{d}^{\hspace*{1pt}\prime},\kvec')}
    }{
      \abs*{\mathcal{B}(\boldsymbol{d},k)}
    }
    = \frac{k!}{(M)_k}
      \cdot \prod_{x_i \in \mathcal{W}} d_i
      \cdot \exp\,\paren*{
        \frac{M_2 L_2}{2M^2}
        - \frac{M_2' L_2'}{2(M')^2}
        + O(k^2\dmax^2/M)
      }.
  \end{equation}
Next, by definition of $M_2'$, we see that
\[
    M_2'
       = M_2
        - \sum_{x_i \in \mathcal{W}} (d_i)_2
        + \sum_{x_i \in \mathcal{W}} (d_i - 1)_2
       = M_2 - 2 \sum_{x_i \in \mathcal{W}} (d_i - 1)
       = M_2 \left( 1 - O\left( k \dmax M_2^{-1} \right) \right).
\]
Similarly,
\[
    M' = M-k = M \left( 1 - O \left(k M^{-1} \right) \right),
    \qquad
    L_2' = (k-1)\, M' = L_2 \left(1 - O\left( k M^{-1} \right) \right).
\]
Then
  \begin{align*}
    \frac{M_2 L_2}{2M^2} - \frac{M_2' L_2'}{2(M')^2}
      & \leq \frac{M_2 L_2}{2M^2}
        - \frac{M_2 L_2}{2M^2}
        \cdot \paren*{
           1 - O\paren*{k \dmax M_2^{-1} + kM^{-1} }
        } \\
      & = O\paren*{ k \dmax L_2 M^{-2} + k M_2 L_2 M^{-3} } \\
      & = O\paren*{ \frac{k^2 \dmax}{M} }.
  \end{align*}
The final equality follows as $M_2 \leq \dmax M$ and $L_2 = (k-1)M$.
Therefore, combining the above identities with (\ref{useful}) and
\eqref{eqn:irregular_edge_correlation_n_unsimplified} implies that
  \begin{equation}
    \label{eqn:irregular_edge_cor_ratio_basic}
    \frac{
      \abs*{\mathcal{B} (\boldsymbol{d}^{\hspace*{1pt}\prime},\boldsymbol{k}') }
    }{
      \abs*{\mathcal{B} (\boldsymbol{d},\boldsymbol{k}) }
    }
    = \frac{k!}{M^k}
      \cdot \prod_{x_i \in \mathcal{W}} d_i
      \cdot \big(1 +  O(k^2\dmax^2/M)\big).
  \end{equation}

Next, observe that there is a bijective relationship between
bipartite graphs $B_0 \in\mathcal{B}(\boldsymbol{d}^{\hspace*{1pt}\prime}, \kvec')$,
and bipartite graphs $B \in \mathcal{B}(\boldsymbol{d}, \kvec)$
such that $\mathcal{N}_B(y_s) = \mathcal{W}$, using the map
$B\mapsto B_0 = B \setminus \set{y_s}$ which deletes vertex $y_s$
and reduces the degrees of each neighbour of $y_s$ by~1.
Hence,
\begin{equation}
  \label{eq:new}
    \mathbb{P}_{\pi_{\mathcal{B}}} \paren*{
      \mathcal{N}_B(y_s) = \mathcal{W}
    }
    = \frac{
        \abs*{\mathcal{B}(\boldsymbol{d}^{\hspace*{1pt}\prime},\kvec')}
      }{
        \abs*{\mathcal{B}(\boldsymbol{d},\kvec)}
      }.
\end{equation}
By assumption, $k^2 \dmax^2 = o(M)$, so using
\eqref{eqn:irregular_edge_cor_ratio_basic} we find that
  \begin{equation}
\label{eq:c1}
    \binom{n}{k}
    \cdot \mathbb{P}_{\pi_{\mathcal{B}}} \paren*{\mathcal{N}_B(y_s) = \mathcal{W}}
      \leq \frac{n^k d_{\max}^k}{M^k} \cdot \left(1 + o(1)\right).
  \end{equation}
We can use a similar argument to prove a bound on the conditional edge probability needed for Lemma~\ref{lem:simple_graph_probability}. 
Observe that if $\boldsymbol{d}$ and $\kvec$ satisfy the
conditions of Theorem~\ref{thm:zero_one_matrix_count_precise}
then so do $\boldsymbol{d}^{\hspace*{1pt}\prime\prime}$ and $\kvec''$.
Hence, we can repeat the argument that produces (\ref{eq:new}) followed by the substitutions used above to see that 
  \begin{align*}
    \mathbb{P}_{\pi_{\mathcal{B}}} \paren*{
      \mathcal{N}_B(y_t) = \mathcal{W} \mid
      \mathcal{N}_B(y_s) = \mathcal{W}
    }
    &= \frac{
        \abs*{\mathcal{B}(\boldsymbol{d}^{\hspace*{1pt}\prime\prime},\kvec'')}
      }{
        \abs*{\mathcal{B}(\boldsymbol{d}^{\hspace*{1pt}\prime},\kvec')}
      }\\
 &= \frac{k!}{(M-k)^k}\cdot \prod_{x_i\in \mathcal{W}} (d_i-1) \cdot
      \big(1 +  O(k^2\dmax^2/M)\big).
  \end{align*}
We will divide this expression by the one given in (\ref{eqn:irregular_edge_cor_ratio_basic}) to obtain, using (\ref{eq:new}),
  \begin{align*}
    \frac{\mathbb{P}_{\pi_{\mathcal{B}}} \paren*{
      \mathcal{N}_B(y_t) = \mathcal{W} \mid
      \mathcal{N}_B(y_s) = \mathcal{W}
    }
    }{\mathbb{P}_{\pi_{\mathcal{B}}} \paren*{
      \mathcal{N}_B(y_t) = \mathcal{W}
    }}
    & = \frac{(M)_k}{(M-k)_k} \cdot
      \prod_{x_i \in \mathcal{W}} \paren*{1 - \frac{1}{d_i}}
      \cdot \big(1 + O(k^2d_{\max}^2/M)\big)\\
    & \leq \frac{(M)_k}{(M-k)_k}
      \cdot \big(1 + O(k^2d_{\max}^2/M)\big)\\
    &= 1 + o(1),
  \end{align*}
using (\ref{useful}) and the assumption that $k^2\dmax^2 =
o(M)$.
From the above inequality and (\ref{eq:c1}),
we can apply Lemma~\ref{lem:simple_graph_probability} with
  \[
    c_1 = \frac{n^k d_{\max}^k}{M^k} \cdot \left(1 + o(1)\right)
    \qquad \textrm{and} \qquad
    c_2 =  1 + o(1),
  \]
to complete the proof.
\end{proof}

\subsection{Regular degrees}
\label{sec:regular}

In this section we present a combinatorial argument to establish a
lower bound on the probability that a uniformly random graph from
$\mathcal{B}(\dvec,k)$ is H-simple, when $\dvec=(d,d,\ldots, d)$ is
regular. We first prove a `sensitivity result' for the set of all bipartite graphs with given degrees. We show that adjusting the degrees on one side of the bipartition, to make them closer to regular, can only
increase the number of bipartite graphs.  It is possible that this
result is known, though we could not find it in the literature. We give a proof in Section~\ref{sec:prop} for completeness.

\begin{proposition}
\label{prop:balanced} 
Let $n, m \in \N$ and let
$(\boldsymbol{d},\kvec)$ be a bipartite degree sequence for
the bipartition $X = \set{x_1,\dots,x_n}$ and $Y =
\{y_1,\dots,y_m\}$. Suppose that we have integers $g,h \in [n]$
such that $d_g \geq d_h + 2$ and define $\boldsymbol{d}'$ by
  \[
    d_i' = \left\{
      \begin{array}{ll}
        d_i - 1 & \text{ if } i = g \\
        d_i + 1 & \text{ if } i = h \\
        d_i & \text{ if }  i \in [n] \setminus \{g,h\}.
      \end{array}
    \right.
  \]
Then
  \[
    \abs*{\mathcal{B}(\boldsymbol{d},\kvec)}
      \leq \abs*{ \mathcal{B}(\boldsymbol{d}',\kvec) }.
  \]
\end{proposition}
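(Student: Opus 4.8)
The plan is to localise the whole question to the two affected vertices $x_g,x_h$ by conditioning on the rest of the graph, after which everything reduces to a single inequality between binomial coefficients. Concretely, for $B\in\mathcal{B}(\boldsymbol{d},\kvec)$ I would let $R=R(B)$ be the bipartite graph obtained by deleting $x_g$ and $x_h$ together with their incident edges; thus $R$ lives on $(X\setminus\{x_g,x_h\})\cup Y$ and has $X$-degrees $d_i$ for $i\neq g,h$. For each $y_j\in Y$ set $r_j=k_j-\deg_R(y_j)$, the number of edges $y_j$ still needs from $\{x_g,x_h\}$. The map $B\mapsto R(B)$ partitions $\mathcal{B}(\boldsymbol{d},\kvec)$ into fibres, and the identical construction applied to $\mathcal{B}(\boldsymbol{d}',\kvec)$ yields exactly the same family of residual graphs $R$, since $R$ never sees $x_g$ or $x_h$ and the $Y$-degrees $\kvec$ are unchanged.

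Next I would count the fibre of a fixed admissible $R$, i.e. one with every $r_j\in\{0,1,2\}$. Since all graphs are simple, a vertex with $r_j=2$ must join both $x_g$ and $x_h$, one with $r_j=0$ joins neither, and the vertices with $r_j=1$ are split between the two. Writing $n_2=\#\{j:r_j=2\}$ and $n_1=\#\{j:r_j=1\}$, a handshake count forces $\sum_j r_j=d_g+d_h$ automatically, hence $n_1=d_g+d_h-2n_2$. The number of completions with $\deg(x_g)=d_g$ is then $\binom{n_1}{d_g-n_2}$, and the number with $\deg(x_g)=d_g-1$ (its degree under $\boldsymbol{d}'$) is $\binom{n_1}{d_g-1-n_2}$, with the convention that out-of-range binomials vanish. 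Summing over all admissible $R$ gives
\[
  |\mathcal{B}(\boldsymbol{d},\kvec)| = \sum_{R} \binom{n_1}{d_g-n_2}, \qquad
  |\mathcal{B}(\boldsymbol{d}',\kvec)| = \sum_{R} \binom{n_1}{d_g-1-n_2}.
\]

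It then suffices to establish the termwise inequality $\binom{n_1}{d_g-n_2}\le\binom{n_1}{d_g-1-n_2}$ for every admissible $R$. Writing $N=n_1$, $K=d_g-n_2$ and substituting $n_1=d_g+d_h-2n_2$, the ratio $\binom{N}{K}/\binom{N}{K-1}=(N-K+1)/K$ is at most $1$ exactly when $2K\ge N+1$, and here $2K-(N+1)=d_g-d_h-1\ge 1$ by the hypothesis $d_g\ge d_h+2$; the boundary cases where one side is zero are immediate. Summing this bound over $R$ yields $|\mathcal{B}(\boldsymbol{d},\kvec)|\le|\mathcal{B}(\boldsymbol{d}',\kvec)|$.

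The only real obstacle is the bookkeeping that makes the two counts share a common index set of residual graphs with matching binomial coefficients: one must check that admissibility ($r_j\in\{0,1,2\}$, together with the automatic constraint $\sum_j r_j=d_g+d_h$) is the \emph{same} condition for $\boldsymbol{d}$ and $\boldsymbol{d}'$, so that the comparison is genuinely termwise rather than requiring a matching or Hall-type argument between the two sets. Once the problem is localised in this way, the hypothesis $d_g\ge d_h+2$ enters precisely through binomial monotonicity, and the inequality falls out with no further work.
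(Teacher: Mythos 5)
Your proposal is correct and follows essentially the same route as the paper: both condition on the part of the graph away from $x_g,x_h$ (your residual graph $R$ is equivalent to the paper's $S(B)$, which additionally records the forced edges to common neighbours), observe that each fibre has size $\binom{d_g+d_h-2n_2}{d_g-n_2}$, and conclude by the termwise binomial inequality $\binom{N}{K}\le\binom{N}{K-1}$ when $2K\ge N+1$, which is exactly where $d_g\ge d_h+2$ enters. The bookkeeping point you flag is handled correctly by your vanishing-binomial convention (and in fact $n_2\le d_h$ forces both binomials to be in range), so there is no gap.
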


Using this proposition, we now prove Theorem~\ref{thm:main_regular}.

\begin{proof}[Proof of Theorem \ref{thm:main_regular}]
Let $B = (X \cup Y, A) \in \mathcal{B}(n,d,k)$ so that all nodes in
$X$ are $d$-regular and all nodes in $Y$ are $k$-regular. Throughout
the proof we let $\mathcal{W} \in X^{\set{k}}$ be a fixed
neighbourhood of size $k$, and we consider two fixed nodes $y_s, y_t \in Y$.

We will prove the desired result by conditioning on the
neighbourhoods of $y_s$ and $y_t$ being equal to $\mathcal{W}$,
at which point we can apply Lemma
\ref{lem:simple_graph_probability}. To do this, let $\mathcal{U} \in
X^{\set{k}}$ be any $k$-subset of $X$. We will analyse
  \begin{equation}
    \label{eqn:prob_new_nhood_conditional}
    \mathbb{P}_{\pi_{\mathcal{B}}} \paren*{
      \mathcal{N}_B(y_s) = \mathcal{U} \
      \mid \  \mathcal{N}_B(y_t) = \mathcal{W}
    }.
  \end{equation}
Our goal will be to show that \eqref{eqn:prob_new_nhood_conditional} achieves a minimum at $\mathcal{U} = \mathcal{W}$. 
Let $\vartriangle$ denote the symmetric difference operator.
Given $\mathcal{W}=\mathcal{N}_B(y_t)$, for any subset
$\mathcal{U} \subseteq X^{\set{k}}$ of size $k$, we define a new bipartite
degree sequence $(\boldsymbol{d}^{\mathcal{U}},
\kvec^{\mathcal{U}})$ by
  \[
    d_i^{\mathcal{U}} = \left\{
      \begin{array}{ll}
        d_i - 2 & \textrm{ if } x_i \in \mathcal{U} \cap \mathcal{W} \\
        d_i - 1 & \textrm{ if } x_i \in \mathcal{U} \vartriangle \mathcal{W} \\
        d_i & \textrm{ if } x_i \in X \setminus (\mathcal{U} \cup \mathcal{W})
      \end{array}
    \right.
    \quad \text{ and } \ \
    k_j^{\mathcal{U}} = \left\{
      \begin{array}{ll}
        0 & \textrm{ if } y_j \in \set{y_s,y_t} \\
        k & \textrm{ if } y_j \in Y \setminus \set{y_s,y_t}
      \end{array}
    \right..
  \]
Now, when $\mathcal{U} \in X^{\set{k}}$ and $\abs{\mathcal{U}
\vartriangle \mathcal{W}} > 0$, we can select a node $x_g \in
\mathcal{U} \setminus \mathcal{W}$ and $x_h \in \mathcal{W}
\setminus \mathcal{U}$, and create a new neighbourhood $\mathcal{U}'
= \paren{\mathcal{U} \cup \set{x_h}} \setminus \set{x_g}$. Then, we
see that $d^{\mathcal{U}'}$ is equal to $d^{\mathcal{U}}$ whenever
$i \notin \set{g,h}$, and that $d^{\mathcal{U}}$ is a more locally
balanced degree sequence than $d^{\mathcal{U}'}$. Hence,
$d^{\mathcal{U}'}$ and $d^{\mathcal{U}}$ satisfy the conditions of
Proposition~\ref{prop:balanced}, and applying Proposition~\ref{prop:balanced}
we conclude that
\[
    \abs*{
      \mathcal{B}(\boldsymbol{d}^{\mathcal{U}'}, \kvec^{\mathcal{U}'})
    }
    \leq \abs*{
      \mathcal{B}(\boldsymbol{d}^{\mathcal{U}}, \kvec^{\mathcal{U}})
    }.
\]
Since $\abs{\mathcal{U} \vartriangle \mathcal{W}} > 0$ is at most $2k$,
we can repeat the above process a finite number of times to conclude
that for any $\mathcal{U} \in X^{\set{k}}$,
  \begin{equation}
    \label{eqn:udash_smaller_than_W}
    \abs*{
      \mathcal{B}(\boldsymbol{d}^{\mathcal{W}}, \kvec^{\mathcal{W}})
    }
    \leq \abs*{
      \mathcal{B}(\boldsymbol{d}^{\mathcal{U}}, \kvec^{\mathcal{U}})
    }.
  \end{equation}
Since
  \[
    \frac{
      \abs*{
        \mathcal{B}(\boldsymbol{d}^{\mathcal{U}}, \boldsymbol{k}^{\mathcal{U}})
      }
    }{
      \abs*{
        \mathcal{B}(\boldsymbol{d}, \boldsymbol{k})
      }
    }
    =  \mathbb{P}_{\pi_{\mathcal{B}}} \paren*{
      \mathcal{N}_B(y_t) = \mathcal{W}
    }\cdot \mathbb{P}_{\pi_{\mathcal{B}}} \paren*{
      \mathcal{N}_B(y_s) = \mathcal{U} \
      \mid \  \mathcal{N}_B(y_t) = \mathcal{W},
    }
  \]
the inequality in \eqref{eqn:udash_smaller_than_W} implies that for any $\mathcal{U},
\mathcal{W} \in X^{\set{k}}$, we have
  \begin{equation}
    \label{eqn:probW_minimum}
    \mathbb{P}_{\pi_{\mathcal{B}}} \paren*{
      \mathcal{N}_B(y_s) = \mathcal{W} \
      \mid \  \mathcal{N}_B(y_t) = \mathcal{W}
    }
    \leq \mathbb{P}_{\pi_{\mathcal{B}}} \paren*{
      \mathcal{N}_B(y_s) = \mathcal{U} \
      \mid \  \mathcal{N}_B(y_t) = \mathcal{W}
    }.
  \end{equation}
But $X^{\set{k}}$ has size $\binom{n}{k}$, so summing over all
possible choices for $\mathcal{U} \in X^{\set{k}}$ in
\eqref{eqn:probW_minimum} shows us that (since the probabilities on
the right must sum to unity)
\[
    \mathbb{P}_{\pi_{\mathcal{B}}} \paren*{
      \mathcal{N}_B(y_s) = \mathcal{W} \
      \mid \  \mathcal{N}_B(y_t) = \mathcal{W}
    }
    \leq \binom{n}{k}^{-1}.
\]
Since $\boldsymbol{d}$ and $\kvec$ are both regular, by
symmetry every possible $\mathcal{W} \in X^{\set{k}}$ is equally
likely; hence
  \[
    \mathbb{P}_{\pi_{\mathcal{B}}} \paren*{
      \mathcal{N}_B(y_t) = \mathcal{W}
    }
    = \binom{n}{k}^{-1}.
  \]
Thus we can apply Lemma \ref{lem:simple_graph_probability} with $c_1
= c_2 = 1$ to conclude that
  \begin{equation*}
    \mathbb{P}_{\pi_{\mathcal{B}}} \paren*{
      \mathcal{B}^*(\boldsymbol{d}, k)
    }
    \geq 1 - \binom{m}{2}\binom{n}{k}^{-1}.
  \end{equation*}
This completes the proof.
\end{proof}

\subsubsection{Sensitivity result for bipartite degree sequences}
\label{sec:prop}

In this section we prove Proposition~\ref{prop:balanced}.

\begin{proof}[Proof of Proposition~\ref{prop:balanced}]
Let $n, m \in \mathbb{N}$ and let $(\dvec, \kvec)$ be a bipartite degree sequence for the bipartition $X, Y$. Suppose that $g, h \in [n]$ are chosen such that $d_g \geq d_h + 2$. We emphasise that $g$ and $h$ are fixed throughout the proof. Let $\vartriangle$ denote the symmetric difference operator. For any $B = (V,E)$ such that $B \in \mathcal{B}(\dvec, \kvec)$, let
  \[ 
    R_2(B) = \mathcal{N}_B(x_g) \cap \mathcal{N}_B(x_h),
    \quad
    R_1(B) = \mathcal{N}_B(x_g) \vartriangle \mathcal{N}_B(x_h),
  \]
and let $S(B)$ be defined by
  \[ 
    S(B) 
      = \paren*{\set{x_g, x_h} \times R_2(B)} 
        \:\cup\: \set{e \in E \colon e \cap \set{x_g, x_h} = 0}. 
   \]
We see that $S(B)$ is the set of all edges from $B$ which contain vertices $y \in Y$ that are adjacent to both $x_g$ and $x_h$, or adjacent to neither of these vertices. Next, let $\mathcal{S}$ be the collection of possible values for $S(B)$:
  \[
    \mathcal{S}(\dvec, \kvec) = \set{S(B) \colon B \in \mathcal{B}(\dvec, \kvec)}.
  \]
Observe that $\mathcal{S}(\dvec,\kvec)\subseteq \mathcal{S}(\dvec',\kvec)$,
since in any $B\in\mathcal{B}(\dvec,\kvec)$ with $S(B)=S_0$, we may
create a bipartite graph $B'\in\mathcal{B}(\dvec',\kvec)$ with $S(B')=S_0$
by deleting the edge $x_g\, y$ and replacing it with $x_h\, y$, for some $y\in \mathcal{N}_B(x_g)$.

For any $S_0 \in \mathcal{S}(\dvec, \kvec)$, we let $\mathcal{T}(S_0; \dvec, \kvec)$ be the subset of $\mathcal{B}(\dvec, \kvec)$ defined by 
  \[
    \mathcal{T}(S_0; \dvec, \kvec) 
    = \set{B \in \mathcal{B}(\dvec, \kvec) \colon S(B) = S_0}.
  \]
Then the set $\{ \mathcal{T}(S_0; \dvec, \kvec) \colon S_0 \in \mathcal{S}(\dvec, \kvec) \}$ forms a partition of $\mathcal{B}(\dvec, \kvec)$. Similarly, the set $\{ \mathcal{T}(S_0; \dvec', \kvec) \colon S_0 \in \mathcal{S}(\dvec, \kvec) \}$ 
forms a partition of $\mathcal{B}(\dvec', \kvec)$. 

We claim that $\abs{\mathcal{T}(S_0; \dvec, \kvec)} \leq \abs{\mathcal{T}(S_0; \dvec', \kvec)}$ for every $S_0 \in \mathcal{S}(\dvec, \kvec)$. 
To see this, first note that for any $B \in \mathcal{B}(\dvec, \kvec)$, the sets $R_1(B)$ and $R_2(B)$ depend only on $S(B)$. Hence we will instead write $R_1(S_0)$, $R_2(S_0)$ for these sets, where $S_0 = S(B)$. Further, for every $B \in \mathcal{T}(S_0; \dvec, \kvec)$ the degree of vertex $x_i \in X$ in $E \setminus S_0$ is
  \[
    \begin{cases}
        d_i - \abs{R_2(S_0)} & \textrm{ if } i \in {g,h}, \\
        0 & \textrm{ otherwise, }
      \end{cases}
  \]
and vertex $y_j \in Y$ has degree $1$ if $j \in R_1(S_0)$, or $0$ otherwise. Next, let $S_0'$ be any set of bipartite edges satisfying this degree sequence. Then $S_0 \cap S_0' = \emptyset$ by construction, and the combined bipartite degree sequence of $S_0 \cup S_0'$ is $(\dvec, \kvec)$. Hence, the bipartite graph with edges $S_0 \cup S_0'$ is an element of $\mathcal{T}(S_0; \dvec, \kvec)$, and every element of
$\mathcal{T}(S_0;\dvec,\kvec)$ is created exactly once in this way. 
It follows
that for all $S_0 \in \mathcal{S}(\dvec,\kvec)$, 
  \[
    \abs{\mathcal{T}(S_0; \dvec, \kvec)} 
       = \binom{d_g + d_h - 2\abs{R_2(S_0)}}{d_g - \abs{R_2(S_0)}} 
       \leq \binom{d_g' + d_h' - 2\abs{R_2(S_0)}}{d_g' - \abs{R_2(S_0)}} 
       = \abs{\mathcal{T}(S_0; \dvec', \kvec)},
      \]
since $d_g' = d_g - 1 \geq  d_h + 1 = d_h'$. This verifies the claimed inequality. Finally, we see that
  \[
    \abs{\mathcal{B}(\dvec,\kvec)}
      = \sum_{S_0 \in \mathcal{S}(\dvec, \kvec)} \abs{\mathcal{T}(S_0; \dvec, \kvec)}
      \leq \sum_{S_0 \in \mathcal{S}(\dvec, \kvec)} \abs{\mathcal{T}(S_0; \dvec', \kvec)}
      \leq \abs{\mathcal{B}(\dvec',\kvec)},
  \]
which completes the proof of Proposition~\ref{prop:balanced}.
\end{proof}

\subsection*{Acknowledgements}
We are grateful to the referees for their helpful comments.

Research was supported by the Netherlands Organisation for
Scientific Research (NWO) through Gravitation Programme Networks
024.002.003. Martin Dyer is supported by the EPSRC research grant
EP/S016562/1 ``Sampling in hereditary classes''.
Catherine Greenhill is supported by the
Australian Research Council Discovery Project DP190100977.

Pieter Kleer acknowledges that part of this work was carried out while he was a postdoctoral fellow at the Max Planck Institute for Informatics in Saarbr\'ucken, Germany.

\appendix
\section{Mixing time bounds for the switch Markov chain}
\label{app:switch} \color{red}

\color{black}

The switch Markov chain was analysed by Cooper, Dyer and
Greenhill~\cite{CDG,CDG-corrigendum} for regular graphs, and by Greenhill and
Sfragara~\cite{GS} for irregular graphs that are relatively sparse.
Some situations which lead to additional factors in the mixing time
bounds from~\cite{CDG,CDG-corrigendum,GS} do not arise in bipartite graphs, and so
it is possible to improve the mixing time bounds for bipartite
graphs. To the best of our knowledge, these bounds have not been
presented elsewhere, so we write them down here.  The proofs
from~\cite{CDG,CDG-corrigendum,GS} use the multicommodity flow method, and are quite
long and technical. We do not give full details, but rather explain
how the proofs from~\cite{CDG,CDG-corrigendum,GS} can be adapted to the bipartite
setting, and give the resulting mixing time bounds. For all notation
that is not defined, and all other missing details, we refer the reader to~\cite{CDG,CDG-corrigendum,GS}.

We begin with regular bipartite graphs, where there are $n$ nodes in
each side of the bipartition and all nodes have degree $d$. (We
stress that this case is not particularly relevant for the
problem of sampling hypergraphs, unless the hypergraph is
$d$-regular and $d$-uniform.) For regular bipartite graphs, the
bound on the bipartite switch chain is a factor of $\nfrac{1}{32}\, d^6 n^2$
smaller than the general (not-necessarily-bipartite) case. (The
constant factor 32 in Theorem \ref{CDGbipartite} below arises since graphs in
$\mathcal{B}(n,d,d)$ have $2n$ nodes.)

\begin{theorem} \label{CDGbipartite}
Let $\mathcal{B}(n,d,d)$
be the set of $d$-regular bipartite graphs with $2n$ nodes and a
given bipartition.  Then the mixing time of the bipartite switch
chain on $\mathcal{B}(n,d,d)$ satisfies
\[ \tau(\varepsilon) \leq 32 d^{17} n^6 \big(2dn\log(2dn) + \log(\varepsilon^{-1})\big).\]
\end{theorem}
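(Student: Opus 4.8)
The plan is to adapt the multicommodity flow analysis of Cooper, Dyer and Greenhill~\cite{CDG,CDG-corrigendum}, as refined for sparse irregular degrees by Greenhill and Sfragara~\cite{GS}, to the bipartite setting, and to track through that argument the combinatorial factors which the bipartite structure lets us discard. Write $\Omega = \mathcal{B}(n,d,d)$. The switch chain is reversible with respect to the uniform distribution $\pi$ on $\Omega$, and after adding a holding probability to make it aperiodic, the flow method bounds its mixing time by
\[
  \tau(\varepsilon) \;\leq\; \overline{\rho}\,\bigl(\log|\Omega| + \log\varepsilon^{-1}\bigr),
\]
where $\overline{\rho} = \max_{t}\, Q(t)^{-1}\sum_{\gamma \ni t} f(\gamma)\,\ell(\gamma)$ is the length-weighted congestion of a chosen flow $f$, the maximum being over transitions $t=(Z,Z')$ with $Q(t)=\pi(Z)P(Z,Z')$. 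Since $\pi$ is uniform, $\log|\Omega| = \log|\mathcal{B}(n,d,d)|$, and a direct count (or Theorem~\ref{thm:zero_one_matrix_count_precise} with $M=dn$) gives $\log|\mathcal{B}(n,d,d)| \leq 2dn\log(2dn)$, which is exactly the first bracketed factor in the claimed bound. It therefore remains to establish $\overline{\rho} \leq 32\,d^{17}n^6$.

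To build the flow I would, for each ordered pair $(G,G')\in\Omega^2$, route $f(G,G')=\pi(G)\pi(G')=|\Omega|^{-2}$ units of demand along canonical paths obtained from the symmetric difference $G\vartriangle G'$. As in~\cite{CDG,GS}, one decomposes $G\vartriangle G'$ into edge-disjoint circuits and processes them one at a time, each circuit being unwound by a sequence of switches that successively replaces $G$-edges by $G'$-edges while staying inside $\Omega$. The key simplification in the bipartite case is that every circuit in $G\vartriangle G'$ alternates between $X$ and $Y$, hence has even length and a clean alternating orientation; in particular the odd-circuit and parity complications that force extra cases, and extra counting factors, in the general analysis of~\cite{CDG,GS} do not arise at all. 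Each canonical path has length at most the number of edges of $G\vartriangle G'$, so $\ell(\gamma)\leq 2dn$, and the transition probability $P(Z,Z')$ is $\Theta\bigl((dn)^{-2}\bigr)$ since a switch is specified by an ordered pair of edges.

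The heart of the argument is to bound, for a fixed transition $t=(Z,Z')$, the quantity $\sum_{\gamma\ni t} f(\gamma)\,\ell(\gamma)$. Following~\cite{CDG}, I would construct an injective encoding $\Phi_t$ from the set of pairs $(G,G')$ whose canonical path uses $t$ into $\Omega\times\Lambda_t$, where $\Lambda_t$ is a small auxiliary set recording the extra data (the defect edges and the identity of the circuit currently being processed) needed to reconstruct $(G,G')$ from the intermediate state $Z$. The size $|\Lambda_t|$ is where every polynomial factor enters. Because the bipartite circuits are even and alternating, the encoding stores strictly less combinatorial information than in~\cite{CDG,GS} — this is precisely the origin of the advertised saving of a factor $\tfrac{1}{32}d^6n^2$ over the general $d$-regular bound. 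Assembling the pieces, the factor $|\Omega|$ from $Q(t)^{-1}$, the factor $(dn)^2$ from $P(Z,Z')^{-1}$, the path length $2dn$, the demand $|\Omega|^{-2}$, and the bound $|\{(G,G'):t\in\gamma\}|\leq |\Omega|\cdot|\Lambda_t|$ combine so that the $|\Omega|$ factors cancel and a careful bookkeeping of $|\Lambda_t|$ yields $\overline{\rho}\leq 32\,d^{17}n^6$. Substituting the two bounds into the flow inequality gives the theorem.

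The main obstacle is the congestion estimate itself: verifying injectivity of $\Phi_t$ and bounding $|\Lambda_t|$ with the sharp exponents and the explicit constant. Obtaining $32\,d^{17}n^6$ requires re-running the long and technical encoding of~\cite{CDG,CDG-corrigendum,GS} in the bipartite setting and confirming, case by case, that each place where those papers pay for odd circuits, coincident endpoints, or the full vertex count of a general graph on $N$ nodes can be charged more cheaply when the graph is bipartite on $2n$ nodes; as the authors note, the constant $32$ arises precisely from substituting $N=2n$ in the relevant powers. I expect no new conceptual ingredients beyond~\cite{CDG,GS}, but essentially all of the effort lies in this bookkeeping.
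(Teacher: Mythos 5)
Your overall strategy --- adapt the multicommodity-flow/encoding analysis of~\cite{CDG,CDG-corrigendum,GS} to the bipartite setting and track which factors can be discarded --- is exactly the route the paper takes, and your setup (the flow inequality, $\ell(f)\leq dN/2$ with $N=2n$, and $1/Q(e)=O(d^2N^2\,|\mathcal{B}(n,d,d)|)$) matches. The genuine gap is that the entire quantitative content of the theorem is the congestion bound, and your proposal never derives it: you explicitly defer ``essentially all of the effort'' to an unperformed case-by-case re-run of the encoding argument, and simply assert that the bookkeeping ``yields $\overline{\rho}\leq 32\,d^{17}n^6$.'' That assumes the conclusion. The paper instead pins down the two specific improvements, of combined size $d^6N^2$ over~\cite[Theorem~1]{CDG-corrigendum}, which are the steps you would need to supply. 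First, in the bipartite case every circuit of the symmetric difference decomposes into $1$-circuits, so the \emph{shortcut edge} of~\cite{CDG} is never needed; hence an encoding has at most three defect edges (two labelled $-1$ and one labelled $2$, all incident with the start vertex of the $1$-circuit), at most two switches turn an encoding into a graph, and~\cite[Lemma~4]{CDG} improves to $|\mathcal{L}(Z)|\leq 2d^4N^3\,|\mathcal{B}(n,d,d)|$ --- a saving of $d^2N^2$ because one $(-1)$-switch is avoided in the worst case. Second, the yellow-green colouring of~\cite[Lemma~1]{CDG-corrigendum} has at most $10$ bad pairs rather than $14$ (the shortcut edge accounted for up to $4$ of them), and each bad pair costs a factor of $d$, saving $d^4$; a further factor $\nfrac{1}{2}$ comes from $1/Q(e)\leq\nfrac{1}{2}d^2N^2\,|\mathcal{B}(n,d,d)|$.

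Your diagnosis of \emph{why} the bipartite case is easier is also off target: you attribute the saving to ``odd-circuit and parity complications'' disappearing, whereas the operative fact is the absence of the shortcut edge (equivalently, that every circuit decomposes into $1$-circuits), which simultaneously reduces the number of defect edges in the encoding and the number of bad pairs. Without identifying these two concrete improvements there is no way to arrive at the exponents $17$ and $6$ or the constant $32$. To complete the argument you should state and justify the bipartite analogues of~\cite[Lemma~4]{CDG} and~\cite[Lemma~1]{CDG-corrigendum} and combine them with the path-length and $Q(e)$ bounds above.
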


\begin{proof}
Temporarily, write $N = 2n$, for ease of comparison with~\cite{CDG,CDG-corrigendum}.
The flow can be defined in exactly the same way as in~\cite{CDG},
though a shortcut edge will never be needed.    (Every  circuit
decomposes into 1-circuits.) Hence there will be at most 3 defect
edges, two labelled $-1$ and one labelled 2, all incident with the
``start vertex'' $x_0$ of the 1-circuit. At most two switches are needed
to transform any encoding into a graph, and~\cite[Lemma 4]{CDG}
becomes $|\mathcal{L}(Z)|\leq 2d^4 N^3 |\mathcal{B}(n,d,d)|$. (This is
smaller by a factor of $d^2 N^2$ than in the non-bipartite case,
essentially because we save one (-1)-switch in the worst case, which
costs $d^2 N^2$.)

In the bipartite case, 
we save a factor of $d^4$ compared with~\cite[Lemma 1]{CDG-corrigendum} 
(which is a correction of~\cite[Lemma 5]{CDG}).  This is because there
can be at most 10 bad pairs in the yellow-green colouring, not 14.
Recall that each bad pair contributes a factor of $d$. (There are at most 3
bad pairs from each defect edge, plus at most one additional bad pair from
wrapping around at $x_0$.  Alternatively, we no longer have a shortcut edge,
which in~\cite{CDG,CDG-corrigendum} was responsible for up to 4 bad pairs,
so 14 goes down to 10.)

Combining these effects, we save a factor of $d^6 N^2$ compared to
the mixing time from~\cite[Theorem~1]{CDG-corrigendum}.  Note that we have
$\ell(f)\leq dN/2$ and, as there are $dn$ edges in any element of $\mathcal{B}(n,d,d)$,
\[ 1/Q(e) \leq 4 \binom{nd}{2}\, |\mathcal{B}(n,d,d)|
  \leq \nfrac{1}{2} d^2 N^2 \, |\mathcal{B}(n,d,d)|, \]
which saves an additional factor of $\nfrac{1}{2}$ compared
with~\cite{CDG,CDG-corrigendum}.
\end{proof}

Let $\Delta = \max\{\dmax,\kmax\}$. By adapting the analysis
from~\cite{GS}, we can prove a bound in the irregular
case which is a factor of $\Delta^4 M^2$ smaller than in the general
(not-necessarily-bipartite) case.

\begin{theorem} \label{GS-bipartite_app}
Let $\mathcal{B}(\dvec,\kvec)$ be the set of all bipartite graphs
with a given node bipartition, degrees $\dvec$ on the left and
degrees $\kvec$ on the right.  Suppose that all degrees are at
least~1 and that $3\leq \dmax, \kmax \leq \nfrac{1}{3}
\sqrt{M}$.  Then the mixing time of the bipartite switch chain
satisfies
\[ \tau(\varepsilon) \leq \Delta^{10} M^7 \big( \nfrac{1}{2} M\log(M) +
  \log(\varepsilon^{-1})\big). \]
\end{theorem}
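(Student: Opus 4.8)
The plan is to transfer the multicommodity flow analysis of Greenhill and Sfragara~\cite{GS} to the bipartite setting, in exactly the manner by which Theorem~\ref{CDGbipartite} transferred the regular analysis of~\cite{CDG,CDG-corrigendum}. The switch chain on $\mathcal{B}(\dvec,\kvec)$ is reversible with uniform stationary distribution, so the multicommodity flow method bounds $\tau(\varepsilon)$ in terms of the maximal congestion of a unit flow together with the maximal canonical-path length. I would reuse the flow and the canonical paths of~\cite{GS} verbatim, re-examining only those combinatorial quantities that the bipartite structure improves.

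First I would set up the canonical paths. For a source--target pair $(G,G')$ the symmetric difference $G\vartriangle G'$ decomposes into alternating circuits; as in the regular bipartite case of Theorem~\ref{CDGbipartite}, these always decompose into $1$-circuits, so a shortcut edge is never introduced. Processing each circuit by the switch sequence of~\cite{GS}, every intermediate graph carries at most three defect edges incident with the start vertex $x_0$, labelled $-1$, $-1$ and $2$, and at most two switches recover a graph from any encoding. The central object is the encoding $Z$ together with its auxiliary data, from which $G\cup G'$ is recovered up to the bounded defect correction.

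Second---and this is where the saving arises---I would bound $\abs{\mathcal{L}(Z)}$, the number of state-pairs consistent with a fixed encoding at a fixed transition edge $e$. Following the bookkeeping of Theorem~\ref{CDGbipartite}, eliminating the shortcut edge reduces the number of bad pairs in the yellow--green colouring from $14$ to $10$ in the worst case, so the analogue of the bad-pair estimate (played by~\cite[Lemma 1]{CDG-corrigendum} in the regular case) improves by $\Delta^4$, since each bad pair contributes a factor $\Delta$; and the absence of a worst-case $(-1)$-switch improves the analogue of the encoding estimate (played by~\cite[Lemma 4]{CDG} in the regular case) by a factor of order $M^2$, exactly as it saved $d^2N^2$ there, $M$ being the edge count so that $M^2\sim d^2N^2$. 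Combining these with the bipartite bound $1/Q(e)\le 4\binom{M}{2}\,\abs{\mathcal{B}(\dvec,\kvec)}$ and the path-length bound $\ell(f)=O(M)$, the congestion is smaller by $\Delta^4 M^2$ than the general irregular bound of~\cite{GS}; dividing that bound by $\Delta^4 M^2$ and feeding in the length and $\log(\varepsilon^{-1})$ terms yields $\tau(\varepsilon)\le \Delta^{10}M^7\bigl(\nfrac{1}{2} M\log M+\log(\varepsilon^{-1})\bigr)$.

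The main obstacle is verifying that these simplifications attach to the irregular analysis of~\cite{GS} rather than to the regular analysis of~\cite{CDG}. The irregular treatment is more delicate: the encoding must record non-constant degrees on both sides of the bipartition, and the defect accounting is correspondingly heavier. The hard part is therefore to check that the shortcut edge is genuinely dispensable for every irregular circuit decomposition, and that no new family of bad pairs is created once $\dvec$ and $\kvec$ may differ and may vary within each side. Once this is confirmed, the factor-by-factor comparison against the congestion bound of~\cite{GS} is routine.
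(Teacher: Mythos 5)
Your proposal follows essentially the same route as the paper's own proof: reuse the multicommodity flow of~\cite{GS}, observe that in the bipartite setting every circuit decomposes into $1$-circuits so no shortcut edge is needed, save a factor $\Delta^4$ from reducing the worst-case bad-pair count from $14$ to $10$, save a factor of order $M^2$ in the encoding lemma (\cite[Lemma 2.5]{GS}) by avoiding one $(-1)$-switch, and combine with $1/Q(e) = O(M^2)\,\abs{\mathcal{B}(\dvec,\kvec)}$ and $\ell(f)\leq M/2$. The paper is only slightly more explicit about the encoding count, writing out the ratios $a=2\Delta^2M$, $b=2\Delta^2$, $c=\nfrac{9}{8}M^2$ and the resulting bound $\abs{\mathcal{L}^*(Z)}\leq(1+b+c+bc+c^2+ac)\,\abs{\mathcal{B}(\dvec,\kvec)}\leq 2M^4\,\abs{\mathcal{B}(\dvec,\kvec)}$, but otherwise your sketch matches it step for step (including the caveats you flag, which the paper also leaves to the reader).
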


\begin{proof}
We have $1/Q(e) \leq M^2 \, |\mathcal{B}(\dvec,\kvec)|$ and
$\ell(f)\leq M/2$, as in~\cite[Theorem~1.1]{GS}. Arguing as above,
the number of bad pairs is at most 10, not 14, saving a factor of
$\Delta^4$.  The main thing is the critical lemma~\cite[Lemma 2.5]{GS}, where we give an
upper bound on the number of encodings.  We claim that
\[ |\mathcal{L}^*(Z)| \leq 2 M^4 |\mathcal{B}(\dvec,\kvec)|\]
so long as $\Delta \leq \nfrac{1}{3} \sqrt{M}$, say.  

In~\cite[Lemma 2.5]{GS}, we only performed a $(-1,2)$-switching if we had
four defect edges.  This was to ensure that we definitely had a
$(-1)$-defect edge incident with a 2-defect edge: but when there is
no shortcut edge, this is already guaranteed when we have three
defect edges. Therefore, letting
\[ a = 2\Delta^2 M,\qquad b = 2\Delta^2,\qquad c = \nfrac{9}{8} M^2\]
be the upper bounds that we proved in~\cite[Lemma 2.5]{GS} on the
various ratios, we obtain
\[ |\mathcal{L}^*(Z)| \leq (1 + b + c + bc + c^2 + ac)\, |\mathcal{B}(\dvec,\kvec)|.\]
(The saving here is replacing $bc^2$ by $ac$, and in omitting the
terms involving $b^2$ or $abc$ or $ac^2$, which were needed
in~\cite{GS} to deal with the shortcut edge.)  Using the bounds
$3\leq \Delta$ and $\Delta^2 \leq M/9$, we see that
\[ |\mathcal{L}^*(Z)|\leq 2M^4.\]
This is a factor of $M^2$ smaller than the corresponding bound given
in~\cite[Lemma 2.5]{GS}, again because (in the worst case) we save a
$(-1)$-switch, which gives a ratio of $\nfrac{9}{8}M^2$.  Combining
this with the earlier saving of $\Delta^4$, we obtain the stated
bound.
\end{proof}

The following corollary is most relevant to sampling uniform
hypergraphs with given degrees. It follows directly from Theorem \ref{GS-bipartite_app} by considering a regular sequence $\kvec$.

\begin{corollary}
\label{cor:switch_chain_mixing_semiregular}
Let $\mathcal{B}(\dvec,k)$ be the set of all bipartite graphs with a
given node bipartition, degrees $\dvec = (d_1,\ldots, d_n)$ on one
side and with $m$ nodes of degree $k$ on the other. Let $M = km =
\sum_{j=1}^n d_j$. Suppose that all degrees are at least~1 and that
$3\leq \dmax, k\leq \nfrac{1}{3} \sqrt{M}$. Then the mixing time
of the bipartite switch chain satisfies
\[ \tau(\varepsilon) \leq \Delta^{10} M^7 \big( \nfrac{1}{2} M\log(M) +
  \log(\varepsilon^{-1})\big) \]
where $\Delta = \max\{ \dmax,\, k\}$.
\end{corollary}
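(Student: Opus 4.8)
The plan is to obtain Corollary~\ref{cor:switch_chain_mixing_semiregular} as an immediate specialization of Theorem~\ref{GS-bipartite_app} to the half-regular setting. First I would take $\kvec = (k,k,\ldots,k)$, the constant sequence consisting of $m$ copies of $k$, so that $\mathcal{B}(\dvec,\kvec)$ coincides with the set $\mathcal{B}(\dvec,k)$ under the notational convention introduced in Section~\ref{s:related}. Since every entry of $\kvec$ equals $k$, we have $\kmax = k$, and therefore $\Delta = \max\{\dmax,\kmax\} = \max\{\dmax,k\}$, which is exactly the quantity appearing in the statement of the corollary.

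Next I would check that the hypotheses of Theorem~\ref{GS-bipartite_app} are met under the assumptions of the corollary. The requirement that all degrees be at least~$1$ is immediate, since $k \geq 3$ and every $d_i \geq 1$ by assumption. The constraint $3 \leq \dmax,\kmax \leq \nfrac{1}{3}\sqrt{M}$ becomes $3 \leq \dmax, k \leq \nfrac{1}{3}\sqrt{M}$ once we substitute $\kmax = k$, which is precisely the hypothesis imposed in the corollary. With these conditions verified, Theorem~\ref{GS-bipartite_app} applies verbatim and yields
\[ \tau(\varepsilon) \leq \Delta^{10} M^7 \big( \nfrac{1}{2} M\log(M) + \log(\varepsilon^{-1})\big), \]
which is exactly the claimed bound with $\Delta = \max\{\dmax,k\}$.

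Because the corollary is a direct specialization, there is no genuine obstacle in the argument; the only point requiring a moment's care is the bookkeeping that the total degree $M$ is well-defined and consistent on both sides. Specifically, one must note that $km = M$ holds, so that the $m$ right-hand nodes of degree $k$ contribute the same total degree $M = \sum_{i=1}^n d_i$ as the left-hand side. This is guaranteed by the graphicality condition $\sum_{i=1}^n d_i = \sum_{j=1}^m k_j$, without which $\mathcal{B}(\dvec,k)$ would be empty and there would be nothing to prove. Once this consistency is observed, the mixing-time bound transfers directly from Theorem~\ref{GS-bipartite_app}.
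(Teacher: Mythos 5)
Your proposal is correct and is exactly the paper's argument: the authors likewise obtain the corollary by applying Theorem~\ref{GS-bipartite_app} to the constant sequence $\kvec=(k,\ldots,k)$, so that $\kmax=k$ and the hypotheses and conclusion specialize verbatim. No further comment is needed.
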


\end{document}